\theoremstyle{theorem}
\newtheorem{lemma}{Lemma}[section]
\newtheorem{theorem}[lemma]{Theorem}
\newtheorem{proposition}[lemma]{Proposition}
\newtheorem{corollary}[lemma]{Corollary}
\theoremstyle{definition}
\newtheorem{example}[lemma]{Example}
\newcommand{\In}{\subseteq}
\newcommand{\cl}{\mathop{\rm cl}}
\newcommand{\IN}{{\mathbb N}}
\def\supt{\mbox{$\sup^{(2)}$}}
\begin{document}

\title[]{Two Counterexamples Concerning the Scott Topology on a Partial Order}

\author[]{Peter Hertling}
\address{Fakult\"at f\"ur Informatik, Universit\"at der Bundeswehr M\"unchen, 85577
Neubiberg, Germany}
\email{peter.hertling@unibw.de}

\thanks{{\em 2012 ACM Computing Classification System:}
Theory of Computation - Logic,
Mathematics of computing - Continuous Mathematics - Topology}

\keywords{
Directed complete partial order;
supremum function;
product topology;
continuity;
bounded completeness}

\thanks{The author was supported by the EU grant FP7-PEOPLE-2011-IRSES No. 294962: COMPUTAL.}


\begin{abstract}
We construct a complete lattice $Z$ such that the binary supremum
function $\sup:Z\times Z\to Z$ is discontinuous with respect to the product topology on $Z\times Z$
of the Scott topologies on each copy of $Z$. In addition, we show that bounded completeness
of a complete lattice $Z$ is in general not inherited by the dcpo $C(X,Z)$ of continuous
functions from $X$ to $Z$ where $X$ may be any topological space and where on $Z$
the Scott topology is considered.
\end{abstract}

\maketitle

\section{Introduction}
\label{section:intro}

In this note two counterexamples in the theory of partial orders are constructed.
The first counterexample concerns the question whether the binary supremum function
on a sup semilattice is continuous with respect to the product topology of the 
Scott topology on each copy of the sup semilattice.
The second counterexample concerns the question whether bounded completeness
of a complete lattice is inherited by the dcpo of continuous functions from some
topological space to the complete lattice.

The Scott topology on a partially ordered set (short: {\em poset}) $(X,\sqsubseteq)$
has turned out to be important in many areas of computer science.
Given two posets $(X_1,\sqsubseteq_1)$ and $(X_2,\sqsubseteq_2)$, one can
in a natural way define a partial order relation $\sqsubseteq_\times$ on the product
$X_1 \times X_2$. 
This gives us two natural topologies on the
product $Z= X_1 \times X_2$:
\begin{itemize}
\item
The product of the Scott topology on $(X_1,\sqsubseteq_1)$ and of the
Scott topology on $(X_2,\sqsubseteq_2)$,
\item
the Scott topology on $(Z,\sqsubseteq_\times)$.
\end{itemize}
It is well known that the second topology on $Z=X_1\times X_2$ is always at
least as fine as the first topology, and that there are examples where the
second topology is strictly finer than the first topology. 
The best known example of this kind seems to
consist of the Johnstone space (compare \cite{Joh81}) for $(X_1,\sqsubseteq)$
and the dcpo of its Scott open subsets for $(X_2,\sqsubseteq)$;
see \cite[Exercise II-4.26 and Exercise II.1-36]{GHKLMS03}
and \cite[Exercise 5.2.16 and Exercise 5.2.15]{Gou13}.

If the supremum of any two elements of a poset $(X,\sqsubseteq)$ exists
(in this case the partial order is called a {\em sup semilattice})
then the function $\supt:X\times X\to X$ is well defined. Is it continuous? It
is easy to see that it is continuous if on $X\times X$ the finer topology of the two
topologies discussed above, the Scott topology, is considered. 
The natural question arises whether
the function $\supt:X \times X\to X$ is still continuous if on $X\times X$
one considers the coarser topology, the product topology.
It is the first main goal of this note to show
that in general this is not the case.
In fact, we are going to construct even a complete lattice $(X,\sqsubseteq)$
such that the function $\supt:X\times X\to X$ is not continuous with respect
to the product topology on $X\times X$.

To the best of my knowledge such an example was not yet known.
This topic was considered in the book 
\cite[Page 139]{GHKLMS03}
and in its predecessor \cite{GHKLMS80} (compare \cite[Page 106]{GHKLMS80}):
``{\em We cannot be satisfied by saying that the sup operation
is a continuous function $(L\times L,\sigma(L\times L)) \to (L,\sigma(L))$; this 
continuity is weaker, since in general we have a proper containment of topologies:
$\sigma(L\times L) \supset \sigma(L) \times \sigma(L)$.
We will return to this question at greater length in Section II-4 below
(see II-4.13 ff.).}''
Note that in this quote $L$ is a poset, and
$\sigma(L)$ denotes the Scott topology on $L$.
In turns out that in \cite[Section II-4]{GHKLMS03} the authors consider conditions
that imply that the product topology on $X\times X$ is identical with the Scott
topology on $X\times X$. Then, of course, 
the function $\supt:X\times X\to X$ is continuous with respect to the
product topology on $X\times X$ as well
(provided that $(X,\sqsubseteq)$ is a sup semilattice).
But it seems that the general question whether the function
$\supt:X\times X\to X$ must be continuous if on $X\times X$ the product
topology is considered has not been answered so far.
In the book \cite{GHKLMS03} I was not able to find an example of a 
sup semilattice $(X,\sqsubseteq)$ such that the function
$\supt:X\times X\to X$ is not continuous with respect to the
product topology on $X$. I did also not find a construction of such an example in any
other source on domain theory that I consulted.

Our second counterexample concerns bounded completeness of the dcpo of 
continuous functions from a topological space to a bounded complete dcpo.
A poset is called {\em bounded complete} if each bounded subset has a supremum.
The following question arises: If $(Z,\sqsubseteq)$ is a bounded complete dcpo
and $X$ a topological space, is then the dcpo $C(X,Z)$ of continuous functions
from $X$ to $Z$ bounded complete as well? It turns out that the answer
to this question is yes if
the topology on $X$ is the Scott topology of some partial order on $X$. 
One may now speculate whether the answer is yes even for arbitrary topological spaces.
For example, Edalat \cite[Page 502]{Eda08}
writes
``{\em For any topological space $Z$ and any bounded complete dcpo $D$ with bottom
$\bot$, let $Z \to  D$ be the bounded complete dcpo of Scott continuous functions from
$Z$ to $D$.}''
But it is the second main goal of this note to show that in general the answer to this
question is no.
In Section~\ref{section:bc}
we shall even construct a topological space $X$ and a complete lattice
$Z$ such that the dcpo of continuous functions from $X$ to $Z$ is not bounded complete.
This construction makes essential use of the first counterexample (in Section~\ref{section:sup}).
Note that in \cite[Page 26]{KL14} the closely related question is discussed
whether the pointwise supremum of two Scott continuous
functions $f,g:X \to Z$ for a topological space $X$ and a complete lattice $Z$
is again Scott continuous. The authors observe that this is true if the product topology
on $Z\times Z$ is identical with the Scott topology on $Z\times Z$.
Concerning the general case, they write ``{\em But in general, we cannot
conclude that $f \vee g$ or $f \wedge g$ are Scott continuous}''.
But they do not give an example where the pointwise
supremum $f\vee g$ of $f$ and $g$ is not Scott continuous.
In Theorem~\ref{theorem:bounded-complete-counterexample},
the main result of Section~\ref{section:bc}, such an example is constructed.

After collecting several elementary notions concerning partial orders in the following section,
in Section~\ref{section:johnstone} we modify the Johnstone space in order to obtain a somewhat
simpler partial order which is even a complete lattice. 
In Section~\ref{section:sup} we construct a complete lattice $(X,\sqsubseteq)$ such that
such that the function $\supt:X\times X\to X$ is not continuous with respect
to the product topology on $X\times X$.
Finally, in Section ~\ref{section:bc} we
construct a topological space $X$ and a complete lattice
$Z$ such that the dcpo of continuous functions from $X$ to $Z$ is not bounded complete
and such that there exist two Scott continuous functions $f,g:X\to Z$ whose pointwise
supremum is not Scott continuous.

\section{Basic Notions for Partially Ordered Sets}
\label{section:cpo}

In this section, for the convenience of the reader, we collect several elementary notions
concerning partial orders as well as the definition of the Scott topology.

Let $Z$ be a set. A binary relation $\sqsubseteq \In Z \times Z$ satisfying
the following three conditions
\begin{enumerate}
\item
$(\forall z)\ z \sqsubseteq z$ ({\em reflexivity}),
\item
$(\forall x,y,z)\ (x \sqsubseteq y \wedge y \sqsubseteq z) \Rightarrow x \sqsubseteq z$ ({\em transitivity}),
\item
$(\forall y,z)\ (y \sqsubseteq z \wedge z \sqsubseteq y) \Rightarrow y = z$ ({\em antisymmetry}),
\end{enumerate}
is called a {\em partial order}. 
A pair $(Z,\sqsubseteq)$ consisting of a nonempty set $Z$ and a partial order
$\sqsubseteq$ on $Z$ is called a {\em partially ordered set} or {\em poset}. 
Let $(Z,\sqsubseteq)$ be a poset.
\begin{itemize}
\item
An element $z\in Z$ is called an {\em upper bound} of a subset $S \In Z$
if $(\forall s \in S)\ s \sqsubseteq z$.
\item
An element $z\in Z$ is called a {\em supremum} or {\em least upper bound}
of a subset $S \In Z$ if it is an upper bound of $S$ and if for all upper bounds
$y$ of $S$ one has $z \sqsubseteq y$.
Obviously, if a set $S$ has a supremum, then this supremum is unique.
Then we denote it by $\sup(S)$.
\item
An element $z\in Z$ is called a {\em lower bound} of a subset $S \In Z$
if $(\forall s \in S)\ z \sqsubseteq s$.
\item
An element $z\in Z$ is called an {\em infimum} or {\em greatest lower bound}
of a subset $S \In Z$ if it is a lower bound of $S$ and if for all lower bounds
$y$ of $S$ one has $y \sqsubseteq z$.
Obviously, if a set $S$ has an infimum, then this infimum is unique.
Then we denote it by $\inf(S)$.
\item
The poset $(Z,\sqsubseteq)$ is called a {\em sup semilattice}
if $\sup\{x,y\}$ exists for all $x,y\in Z$.
\item
The poset $(Z,\sqsubseteq)$ is called a {\em complete lattice}
if $\sup(S)$ exists for all subsets $S \subseteq Z$.
Note that then also $\inf(S)$ exists for all subsets $S \subseteq Z$.
Indeed, it is well known --- compare, e.g., \cite[Prop. O-2.2(i)]{GHKLMS03} ---
and straightforward to check that
if $\sup(S)$ exists for all subsets $S\subseteq Z$ then for any subset
$T\subseteq Z$, the supremum of the set of all lower bounds of $T$ is
an infimum of $T$.
\item
A subset $S \In Z$ is called {\em upwards closed} if for any elements $s,z\in Z$:
if $s\in S$ and $s \sqsubseteq z$ then $z \in S$.
\item
A subset $S \In Z$ is called {\em directed} if it is nonempty and for any
two elements $x,y \in S$ there exists an upper bound $z \in S$ of the set $\{x,y\}$.
\item
The poset $(Z,\sqsubseteq)$ is called a {\em dcpo} if
for any directed subset $S \In Z$ there exists a supremum of $S$ in $Z$.
\end{itemize}

The following lemma is well known. 

\begin{lemma}[{See, e.g., Goubault-Larrecq~\cite[Proposition~4.2.18]{Gou13}}]
\label{lemma:scott-topology}
Let $(Z,\sqsubseteq)$ be a poset.
The set of all subsets $O \In Z$ satisfying the following two conditions:
\begin{enumerate}
\item
$O$ is upwards closed,
\item
for every directed subset $S\In Z$ such that $\sup(S)$ exists and $\sup(S) \in O$
the intersection $S \cap O$ is not empty,
\end{enumerate}
is a topology on $Z$, called the {\em Scott topology}.
\end{lemma}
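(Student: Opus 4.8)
The plan is to verify directly the three defining axioms of a topology for the collection $\mathcal{T}$ of all subsets $O\In Z$ satisfying conditions (1) and (2). First I would dispose of the two extreme members. The empty set is vacuously upwards closed, and it satisfies condition (2) vacuously because no directed set $S$ can have $\sup(S)\in\emptyset$. The whole space $Z$ is trivially upwards closed, and for any directed $S$ possessing a supremum the intersection $S\cap Z=S$ is nonempty, since a directed set is nonempty by definition; hence $Z\in\mathcal{T}$.

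Next I would treat arbitrary unions. Let $O=\bigcup_{i\in I}O_i$ with each $O_i\in\mathcal{T}$. Upward closure is immediate: any $s\in O$ lies in some $O_i$, and if $s\sqsubseteq z$ then $z\in O_i\In O$ by upward closure of $O_i$. For condition (2), suppose $S$ is directed, $\sup(S)$ exists, and $\sup(S)\in O$; then $\sup(S)\in O_i$ for some $i$, so $S\cap O_i\neq\emptyset$ and therefore $S\cap O\neq\emptyset$. The substantive case is closure under finite intersections, which — using $Z\in\mathcal{T}$ for the empty intersection — reduces to the binary case. Let $O=O_1\cap O_2$ with $O_1,O_2\in\mathcal{T}$. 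Upward closure again transfers componentwise. For condition (2), assume $S$ is directed, $\sup(S)$ exists, and $\sup(S)\in O_1\cap O_2$. Applying condition (2) to each factor separately yields points $x\in S\cap O_1$ and $y\in S\cap O_2$. The crucial step is then to invoke directedness of $S$ to produce a common upper bound $z\in S$ of $\{x,y\}$; since $x\sqsubseteq z$ and $O_1$ is upwards closed we get $z\in O_1$, and since $y\sqsubseteq z$ and $O_2$ is upwards closed we get $z\in O_2$, so $z\in S\cap O$ and the intersection is nonempty.

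I do not expect a serious obstacle, as the whole argument is elementary. The only point requiring genuine care is the binary intersection: one cannot simply reuse a single witness from $S$, but must \emph{merge} the two witnesses $x$ and $y$ through the directedness of $S$ and then propagate membership back into each $O_j$ via upward closure. This is precisely the step where directedness of $S$ (rather than mere nonemptiness) is indispensable, and it is the heart of the lemma; everything else is routine bookkeeping with upward closure.
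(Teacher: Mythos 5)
Your proof is correct and complete; the paper itself gives no proof of this lemma, simply citing Goubault-Larrecq, and your argument is the standard one that such a reference would contain. You correctly identify the one non-routine point, namely that closure under binary intersection requires merging the two witnesses via directedness of $S$ and then propagating membership upward.
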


\section{A Modification of the Johnstone Space}
\label{section:johnstone}

Let $(X_1,\sqsubseteq_1)$ and $(X_2,\sqsubseteq_2)$ be two posets.
On the product $Z := X_1 \times X_2$ we define a
binary relation $\sqsubseteq_\times \subseteq Z \times Z$ by
\[ (x_1,x_2) \sqsubseteq_\times (x_1',x_2') :\iff
   (x_1 \sqsubseteq_1 x_1' \text{ and } x_2 \sqsubseteq_2 x_2') . \]
   
\begin{lemma}
\label{lemma:prod}
\begin{enumerate}
\item
$(Z,\sqsubseteq_\times)$ is a poset as well.
\item
If $(X_1,\sqsubseteq_1)$ and $(X_2,\sqsubseteq_2)$ are dcpo's then
$(Z,\sqsubseteq_\times)$ is a dcpo as well.
\item
If $(X_1,\sqsubseteq_1)$ and $(X_2,\sqsubseteq_2)$ are complete lattices then
$(Z,\sqsubseteq_\times)$ is a complete lattice as well.
\item
Any subset $U \subseteq Z$ open in the product topology on $Z=X_1 \times X_2$
of the Scott topology on $(X_1,\sqsubseteq_1)$ and the
Scott topology on $(X_2,\sqsubseteq_2)$ is open in the Scott
topology on $(Z,\sqsubseteq_\times)$.
\end{enumerate}
\end{lemma}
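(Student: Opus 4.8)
The plan is to treat the four parts in order, basing the first three on the observation that in the product order both upper bounds and suprema are computed coordinatewise via the projections $\pi_1,\pi_2$ onto $X_1,X_2$. Part (1) is routine: I would verify reflexivity, transitivity, and antisymmetry of $\sqsubseteq_\times$, each of which follows immediately from the conjunction in the definition together with the corresponding property of $\sqsubseteq_1$ and $\sqsubseteq_2$.

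The key auxiliary fact for parts (2) and (3) is that $(y_1,y_2)\in Z$ is an upper bound of a subset $S\In Z$ if and only if $y_1$ is an upper bound of $\pi_1(S)$ in $X_1$ and $y_2$ is an upper bound of $\pi_2(S)$ in $X_2$; this is immediate from the definition of $\sqsubseteq_\times$. Consequently, whenever $\sup(\pi_1(S))$ and $\sup(\pi_2(S))$ exist, the pair $(\sup(\pi_1(S)),\sup(\pi_2(S)))$ is the least among all upper bounds of $S$, i.e.\ $\sup(S)=(\sup(\pi_1(S)),\sup(\pi_2(S)))$. For part (3) this finishes the argument directly: for an arbitrary $S\In Z$ (including $S=\emptyset$) the two projected suprema exist because $X_1$ and $X_2$ are complete lattices. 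For part (2) I would first check that if $S$ is directed then $\pi_1(S)$ and $\pi_2(S)$ are directed --- given $x_1,x_1'\in\pi_1(S)$, choose witnesses in $S$, take an upper bound of them in $S$ by directedness, and project --- and then apply the auxiliary fact, using that $X_1,X_2$ are dcpo's to obtain the two projected suprema.

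For part (4) I would first reduce to basic open sets: since the rectangles $U_1\times U_2$ with $U_1$ Scott open in $X_1$ and $U_2$ Scott open in $X_2$ form a basis of the product topology, and since the Scott open subsets of $Z$ are closed under arbitrary unions by Lemma~\ref{lemma:scott-topology}, it suffices to show that each such rectangle is Scott open in $(Z,\sqsubseteq_\times)$. Upward closure of $U_1\times U_2$ is immediate from the coordinatewise upward closure of $U_1$ and $U_2$. The substantive step is the second Scott condition: given a directed $S\In Z$ whose supremum exists and lies in $U_1\times U_2$, I must produce an element of $S$ lying in $U_1\times U_2$. Writing $\sup(S)=(s_1,s_2)$, I would show $s_1=\sup(\pi_1(S))$, noting that $s_1$ is an upper bound of $\pi_1(S)$ and that for any upper bound $y_1$ of $\pi_1(S)$ the pair $(y_1,s_2)$ is an upper bound of $S$ (since $s_2$ bounds $\pi_2(S)$), whence $s_1\sqsubseteq_1 y_1$ by minimality of $\sup(S)$; symmetrically $s_2=\sup(\pi_2(S))$. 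Then, since $\pi_1(S)$ is directed with supremum $s_1\in U_1$ and $U_1$ is Scott open, some $(a_1,a_2)\in S$ has $a_1\in U_1$, and likewise some $(b_1,b_2)\in S$ has $b_2\in U_2$.

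The main obstacle, and the point where directedness is genuinely used, is merging these two witnesses: I would invoke directedness of $S$ to find $(c_1,c_2)\in S$ above both $(a_1,a_2)$ and $(b_1,b_2)$, and then upward closure of $U_1$ and $U_2$ yields $c_1\in U_1$ and $c_2\in U_2$, so $(c_1,c_2)\in S\cap(U_1\times U_2)$. This merging is exactly what fails for general (non-rectangular) open sets, which is the phenomenon the later counterexamples exploit; for rectangles the two coordinates can be handled independently and then recombined through directedness.
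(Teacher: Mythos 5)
Your proof is correct and complete; the paper itself omits the proof of this lemma, stating only that all four assertions are well known and easy to check. Your argument --- coordinatewise computation of upper bounds and suprema for parts (1)--(3), and for part (4) the reduction to basic rectangles followed by projecting the directed set, extracting a witness in each coordinate, and merging the two witnesses via directedness --- is exactly the standard argument the author is alluding to, and your closing remark correctly identifies why the merging step is special to rectangles and fails for general Scott open subsets of the product.
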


As all the assertions in this lemma are well known and easy to check we omit their proofs.

The notions developed here give us two natural topologies on the
product $Z= X_1 \times X_2$:
\begin{itemize}
\item
The product of the Scott topology on $(X_1,\sqsubseteq_1)$ and of the
Scott topology on $(X_2,\sqsubseteq_2)$,
\item
the Scott topology on $(Z,\sqsubseteq_\times)$.
\end{itemize}
According to the fourth assertion of the previous lemma the second topology
is at least as fine as the first topology. The natural question whether the two topologies
must always coincide or not has a negative answer.

\begin{proposition}
\label{prop:johnstone}
There exist dcpo's $(X_1,\sqsubseteq_1)$ and $(X_2,\sqsubseteq_2)$ such that
the Scott topology on $(Z,\sqsubseteq_\times)$ is strictly finer than
the product topology on $Z=X_1 \times X_2$
of the Scott topology on $(X_1,\sqsubseteq_1)$ and the
Scott topology on $(X_2,\sqsubseteq_2)$.
\end{proposition}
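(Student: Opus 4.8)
The plan is to realise the strict inclusion by the classical two‑step construction: take for $(X_1,\sqsubseteq_1)$ the Johnstone space and for $(X_2,\sqsubseteq_2)$ its lattice of Scott‑open sets, and then to exhibit a single subset of $Z=X_1\times X_2$ that is Scott‑open but not open in the product topology. Concretely I would let $X_1=\IN\times(\IN\cup\{\infty\})$, ordered by $(i,j)\sqsubseteq_1(i',j')$ iff either $i=i'$ and $j\le j'$, or $j'=\infty$ and $j\le i'$; a direct check shows this is a dcpo whose only nontrivial directed suprema are the column limits $\sup_n(i,n)=(i,\infty)$, and whose maximal elements $(i,\infty)$ are pairwise incomparable. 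For $X_2$ I take the set $\sigma(X_1)$ of Scott‑open subsets of $X_1$ ordered by inclusion; this is a complete lattice, hence a dcpo, in which directed suprema are unions. (For the present existence statement dcpos suffice, so the section's refinement of $X_1$ into a complete lattice is not needed here and I omit it.) The separating set will be
\[
\Phi:=\{(x,U)\in X_1\times X_2 : x\in U\}.
\]

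First I would verify that $\Phi$ is open in the Scott topology on $(Z,\sqsubseteq_\times)$. It is upwards closed: if $x\in U$ and $(x,U)\sqsubseteq_\times(x',U')$ then $x\sqsubseteq_1 x'$ and $U\In U'$, so $x'\in U$ because $U$ is upwards closed, whence $x'\in U'$. For the second condition of Lemma~\ref{lemma:scott-topology}, let $(x_i,U_i)_{i\in I}$ be directed with supremum $(x,U)=(\sup_i x_i,\bigcup_i U_i)\in\Phi$. Then $x\in U_j$ for some $j$; since $U_j$ is Scott‑open and $x=\sup_i x_i$ is the supremum of the directed set $\{x_i\}$, some $x_i$ lies in $U_j$; picking an index $m$ above both $i$ and $j$ gives $x_m\in U_j\In U_m$, so $(x_m,U_m)\in\Phi$. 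By Lemma~\ref{lemma:prod}(4) this shows that $\Phi$ belongs to the finer of the two topologies.

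The core of the argument is that $\Phi$ is \emph{not} open in the product topology, and I would test this at the point $(x_0,X_1)\in\Phi$ with $x_0=(0,0)$ and $U_0=X_1$ the top element of $X_2$. Were $\Phi$ product‑open there, we would get Scott‑open sets $V\ni x_0$ in $X_1$ and $\mathcal{W}\ni X_1$ in $\sigma(X_1)$ with $V\times\mathcal{W}\In\Phi$. The inclusion forces $V\In W$ for every $W\in\mathcal{W}$; and since $\mathcal{W}$ is Scott‑open and contains the top element $X_1$, any directed family of Scott‑open sets with union $X_1$ must have a member in $\mathcal{W}$, hence a member containing $V$. Thus $V$ would be a nonempty Scott‑open set with the property that every directed family of Scott‑open sets whose union is $X_1$ has a member containing $V$.

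The hard part, and the only place where the specific shape of the Johnstone space is used, is therefore to show that no nonempty Scott‑open set has this property. I would prove this by writing $X_1$ as a directed union of Scott‑open sets no one of which contains the given open $V$. Describing an open set by its column thresholds $c(p)=\min\{n:(p,n)\in V\}$ (with $c(p)=\infty$ if column $p$ meets $V$ only at $\infty$), one checks that $V$ nonempty forces $c(p)<\infty$ for all but finitely many $p$, say for all $p\notin E$ with $E$ finite. For $i\in\IN$ I then set $d_i(p)=0$ for $p\le i$, $d_i(p)=c(p)+1$ for $p>i$ with $p\notin E$, and $d_i(p)=0$ for $p>i$ with $p\in E$; the associated sets $W_i:=\{(p,n):n\ge d_i(p)\}\cup\{(p,\infty):p\in\IN\}$ are Scott‑open, form an increasing chain, and satisfy $\bigcup_i W_i=X_1$. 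Yet for every $i$ there is some $p>i$ with $p\notin E$, where $d_i(p)=c(p)+1>c(p)$, so $(p,c(p))\in V\setminus W_i$ and hence $V\not\In W_i$. This contradicts the conclusion of the previous paragraph; so $\Phi$ is not product‑open, and the Scott topology on $(Z,\sqsubseteq_\times)$ is strictly finer than the product topology.
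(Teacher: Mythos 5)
Your proposal is correct and follows essentially the same route as the paper: the Johnstone space for $X_1$, its lattice of Scott-open sets for $X_2$, and the membership set $\{(x,U):x\in U\}$ as the separating Scott-open set (the paper's Example~\ref{example:johnstone} and Lemma~\ref{lemma:elementof}, where the verification is left to the reader). Your explicit increasing family $W_i$ witnessing non-openness in the product topology is the exact analogue of the sequence $V_{g_i}$ the paper uses later for its modified Johnstone space, so you have simply supplied the details the paper omits.
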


This is well known as well, and the standard example seems to be 
$X_1 :=$ the so-called Johnstone space, $X_2 :=$ the dcpo of the Scott open subsets of $X_1$;
compare \cite[Exercise II-4.26 and Exercise II.1-36]{GHKLMS03}
and \cite[Exercise 5.2.16 and Exercise 5.2.15]{Gou13}.
Since in the following we are going to define a similar but slightly simpler partial order
for comparison in the following example we define the Johnstone space~\cite{Joh81}
and sketch the well-known proof of Proposition~\ref{prop:johnstone}.
Let $\IN=\{0,1,2,\ldots\}$ be the set of non-negative integers.

\begin{example}
\label{example:johnstone}
Let $\leq$ on $\IN\cup\{\omega\}$ be the usual linear ordering with $n\leq\omega$
for all $n \in\IN\cup\{\omega\}$.
On $X_1 := \IN \times (\IN\cup\{\omega\})$ we define a binary relation $\sqsubseteq_1$
by
\[ (i,j) \sqsubseteq_1 (k,l) :\iff ((i=k \text{ and } j \leq l) \text{ or }
             (l = \omega \text{ and } j \leq k)) . \]
Then the pair $(X_1,\sqsubseteq_1)$ is a dcpo. It is called the {\em Johnstone space}.
Let $X_2$ be the set of all Scott open subsets of $X_1$,
and let the binary relation $\sqsubseteq_2$ on $X_2$ be set-theoretic inclusion.
Then $(X_2,\sqsubseteq_2)$ is a dcpo as well. One can check that
the subset $E \subseteq X_1 \times X_2$ defined by
\[ E := \{(x_1,x_2) \in X_1 \times X_2 ~:~ x_1 \in x_2\} \]
is open in the Scott topology on $(X_1 \times X_2, \sqsubseteq_\times)$
but not open in the product topology of  the Scott topology on $(X_1,\sqsubseteq_1)$
and the Scott topology on $(X_2,\sqsubseteq_2)$.
\end{example}

For later use, we explicitly formulate one general step in the previous example.

\begin{lemma}
\label{lemma:elementof}
Let $(X_1,\sqsubseteq_1)$ be a poset,
and let $(X_2,\sqsubseteq_2)$
be the poset consisting of the Scott open subsets of $X_1$
where $\sqsubseteq_2$ is set-theoretic inclusion.
Then the subset $E \subseteq X_1 \times X_2$ defined by
\[ E := \{(x_1,x_2) \in X_1 \times X_2 ~:~ x_1 \in x_2\} \]
is open in the Scott topology on $(X_1 \times X_2, \sqsubseteq_\times)$.
\end{lemma}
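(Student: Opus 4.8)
The plan is to verify directly that $E$ satisfies the two defining conditions of a Scott open set given in Lemma~\ref{lemma:scott-topology}, now applied to the product poset $(X_1 \times X_2, \sqsubseteq_\times)$. Throughout I will use that $\sqsubseteq_2$ is set-theoretic inclusion, so that each second component $x_2$ occurring in a pair is a \emph{Scott open}, and hence upwards closed, subset of $X_1$; this is what couples the two coordinates.

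First I would check that $E$ is upwards closed, which is immediate. Suppose $(x_1,x_2)\in E$, i.e.\ $x_1\in x_2$, and $(x_1,x_2)\sqsubseteq_\times (y_1,y_2)$, i.e.\ $x_1\sqsubseteq_1 y_1$ and $x_2\In y_2$. Since $x_2$ is upwards closed and $x_1\in x_2$ with $x_1\sqsubseteq_1 y_1$, we get $y_1\in x_2\In y_2$, so $(y_1,y_2)\in E$.

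The substantive part is condition (2) of Lemma~\ref{lemma:scott-topology}. Let $S\In X_1\times X_2$ be directed with $\sup(S)=(s_1,s_2)\in E$; I must produce an element of $S\cap E$. I would first record the elementary (and standard) fact that suprema in a product poset are computed componentwise: the projections $\pi_1(S)$ and $\pi_2(S)$ are directed, and whenever $\sup(S)$ exists one has $s_1=\sup\pi_1(S)$ and $s_2=\sup\pi_2(S)$ with these component suprema existing. Moreover, since $X_2$ is ordered by inclusion and closed under unions, $s_2=\bigcup\pi_2(S)$. From $(s_1,s_2)\in E$ we have $s_1\in s_2=\bigcup\pi_2(S)$, so there is some $(a_1,a_2)\in S$ with $s_1\in a_2$. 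Now comes the key move: $a_2$ is itself Scott open, and $s_1=\sup\pi_1(S)$ lies in $a_2$ with $\pi_1(S)$ directed, so applying condition (2) to the open set $a_2$ yields an element $b_1\in\pi_1(S)\cap a_2$; fix a pair $(b_1,b_2)\in S$. Finally I use directedness of $S$ to choose $(c_1,c_2)\in S$ above both $(a_1,a_2)$ and $(b_1,b_2)$. Then $a_2\In c_2$ and $b_1\sqsubseteq_1 c_1$, so $b_1\in a_2\In c_2$ and, $c_2$ being upwards closed, $c_1\in c_2$. Thus $(c_1,c_2)\in S\cap E$, as required.

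I expect the only real obstacle to lie in condition (2), and within it in the realization that the directed family of first components $\pi_1(S)$ should be fed into the Scott-openness of the particular open set $a_2$ that witnesses $s_1\in s_2$; the componentwise-supremum and union facts are routine, and the concluding step is a standard ``combine two witnesses via directedness'' argument.
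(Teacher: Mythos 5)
Your proof is correct. The paper itself omits the proof of this lemma, calling it straightforward, and your argument is exactly the intended one: upward closure follows from upward closure of the Scott open set in the second coordinate, and for the directedness condition you correctly compute the supremum componentwise, use that the second-component supremum is the union $\bigcup\pi_2(S)$ to find a witness $(a_1,a_2)$ with $s_1\in a_2$, feed the directed set $\pi_1(S)$ into the Scott openness of $a_2$ to get a second witness, and combine the two via directedness of $S$.
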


The proof is straightforward. We omit it.
By a modification (which is mostly a simplification) of the definition of the Johnstone space
we can prove the following slight improvement of Prop.~\ref{prop:johnstone}.
It will be used in the construction of the counterexample in Section~\ref{section:sup}.
Note that the theorem deduced in Section~\ref{section:sup} from the
following proposition is actually stronger than
the following proposition.

\begin{proposition}
\label{proposition:mod-johnstone}
There exist complete lattices $(X_1,\sqsubseteq_1)$ and $(X_2,\sqsubseteq_2)$ such that
the Scott topology on $(X_1 \times X_2,\sqsubseteq_\times)$ is strictly finer than
the product topology on $X_1 \times X_2$
of the Scott topology on $(X_1,\sqsubseteq_1)$ and the
Scott topology on $(X_2,\sqsubseteq_2)$.
\end{proposition}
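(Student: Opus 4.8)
The plan is to imitate Example~\ref{example:johnstone} but with a complete lattice in place of the Johnstone dcpo. First I would build $(X_1,\sqsubseteq_1)$ as a \emph{completion} of a variant of the Johnstone space: keep the grid $\IN\times\IN$ together with one limit point for each column and the Johnstone ``twist'' that puts a low grid point below infinitely many distinct column-limits, but reorganize the construction so that all suprema exist. Concretely, I would adjoin a bottom $\bot$ and a top $\top$ and arrange the limit points into a chain (rather than an antichain, as in Example~\ref{example:johnstone}), so that any two elements acquire a least upper bound. Since a dcpo that is a sup semilattice with a least element is automatically a complete lattice --- for a subset $S$ the suprema of its finite subsets form a directed set whose supremum is $\sup(S)$ --- it then suffices to verify binary and directed suprema, which is routine. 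I would set $X_2$ to be the set of Scott-open subsets of $X_1$ ordered by inclusion; this is a complete lattice because the open sets of any topology are closed under arbitrary unions, which provide the suprema. By Lemma~\ref{lemma:prod} the product order $\sqsubseteq_\times$ is then again a complete lattice, as required.

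Next, with $E := \{(x_1,x_2) : x_1\in x_2\}$ as in Lemma~\ref{lemma:elementof}, that lemma already gives that $E$ is Scott-open for $\sqsubseteq_\times$, so the entire content is to show that $E$ is \emph{not} open in the product of the two Scott topologies, exactly as in Example~\ref{example:johnstone}. Unwinding the definitions, $E$ is product-open iff for every $(p,W)\in E$ there are Scott-open $U\ni p$ in $X_1$ and Scott-open $\mathcal V\ni W$ in $X_2$ with $U\In\bigcap\mathcal V$; that is, iff near every point some neighborhood $U$ of $p$ is ``way below'' $W$ in the lattice $X_2$ of open sets. Thus I must exhibit a witness pair $(p,W)\in E$ at which local compactness fails: a point $p$ and a Scott-open $W\ni p$ such that no Scott-open neighborhood of $p$ is way below $W$. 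Equivalently, for each Scott-open $U$ with $p\in U\In W$ I would produce a directed family of Scott-open sets whose union contains $W$, no member of which contains $U$; feeding this family into condition~(2) of Lemma~\ref{lemma:scott-topology} applied to $\mathcal V$ then yields a member of $\mathcal V$ omitting a point of $U$, so that $U\not\In\bigcap\mathcal V$ and no basic product-neighborhood of $(p,W)$ lies in $E$.

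The failure of local compactness is inherited from the Johnstone phenomenon of Example~\ref{example:johnstone}: the twist forces every Scott-open neighborhood of a suitable $p$ to contain infinitely many distinct column-limits and hence, by Scott-openness, to meet infinitely many columns, while a directed family of proper open sets that cuts off low initial segments of the columns in a coordinated way can still cover $W$ without ever engulfing a full neighborhood of $p$. Here lies the main obstacle, and it is a genuine tension created by the completion. On the one hand, the cross-column twist is exactly what destroys binary suprema in the original Johnstone order, so the modification (chaining the limit points and adjoining $\top$) must repair suprema \emph{without} flattening the twist that produces the pathology. On the other hand, adjoining a bottom makes the Scott space of $X_1$ compact --- the only open set containing $\bot$ is $X_1$ itself, whence every open set is trivially way below the top --- so, unlike in the plain Johnstone example, I cannot take $W=X_1$; I must instead locate a \emph{proper} Scott-open $W$ (for instance $X_1\setminus\{\bot\}$) together with a point $p\in W$ at which the Johnstone obstruction still lives, and check that the defeating directed family covers this $W$ while missing every neighborhood of $p$. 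Verifying that the completion leaves such a witness pair intact is the technical heart of the argument; once it is in place, the contradiction with product-openness of $E$ --- and hence the strict inclusion of the two topologies asserted by the proposition --- follows as in Example~\ref{example:johnstone} together with Lemma~\ref{lemma:prod}(4).
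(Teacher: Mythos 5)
Your blueprint follows the paper's proof essentially step for step: complete $X_1$ by adjoining $\bot$ and $\top$ and chaining the column-limits, take $X_2$ to be the Scott opens of $X_1$ under inclusion, use $E=\{(x_1,x_2): x_1\in x_2\}$ together with Lemma~\ref{lemma:elementof} and Lemma~\ref{lemma:prod}(4), and defeat product-openness of $E$ by a directed family of opens covering a \emph{proper} open $W$. You also correctly anticipate the one subtlety introduced by the completion, namely that $W=X_1$ no longer works because $\bot$ makes the whole space compact, and that $W=X_1\setminus\{\bot\}$ is the right replacement. So the architecture is exactly the paper's.

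However, what you defer as ``the technical heart'' is where essentially all of the mathematical content lies, and as written the proposal does not discharge it. You never pin down the order on $X_1$, so neither the claim that it is a complete lattice nor the survival of the Johnstone obstruction can be checked; note that chaining the limit points already forces a low grid point below cofinitely many column-limits, so the ``twist'' you want to preserve has to be formulated compatibly with the chain (the paper takes $(i,p)\sqsubseteq_1(j,q)$ iff $i=j$ and $p\leq q$, or $i\leq j$ and $q=\omega$). More importantly, the existence of the defeating directed family is not automatic from ``the Johnstone phenomenon'': it rests on a complete classification of the Scott opens of the completed space (Lemma~\ref{lemma:mod-johnstone-1}: every nontrivial open is $V_f=\{(i,j): i\geq n,\ f(i)\leq j\}\cup\{\top\}$ for some $f:\IN_n\to\IN$), after which one takes $p=(0,0)$, $W=V_{\tilde 0}=X_1\setminus\{\bot\}$, and, given a basic neighborhood $U=V_f$ of $p$, the increasing family $V_{g_i}$ with $g_i(m)=0$ for $m<i$ and $g_i(m)=f(m)+1$ for $m\geq i$; its union is $W$, yet $(i,f(i))\in U\setminus V_{g_i}$ for every $i$, so no member contains $U$. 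Without the classification of opens and this explicit family, the claim that ``local compactness fails at $(p,W)$'' is an assertion rather than a proof; supplying them is what turns your (correct) plan into the paper's argument.
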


Our modification of the Johnstone space is defined as follows. Let 
\[ X_1 := \{\bot\} \cup (\IN \times (\IN \cup\{\omega\})) \cup\{\top\} . \]
We define a binary relation $\sqsubseteq_1$ on $X_1$ by
\[ x \sqsubseteq_1 y :\iff \begin{array}[t]{l}
       x = \bot \text{ or } y = \top \text{ or } \\
       (\exists i,j \in \IN)(\exists p,q \in \IN\cup\{\omega\}) ( x=(i,p) \text{ and } y=(j,q) \text{ and } \\
       ((i=j \text{ and } p \leq q) \text{ or } (i \leq j \text{ and } q = \omega))) 
       \end{array} \]
for all $x,y \in X_1$.
Compare Figure~\ref{figure:simpler-than-johnstone}.
\begin{figure}[htbp]
\begin{center}
\includegraphics[height=6.5cm]{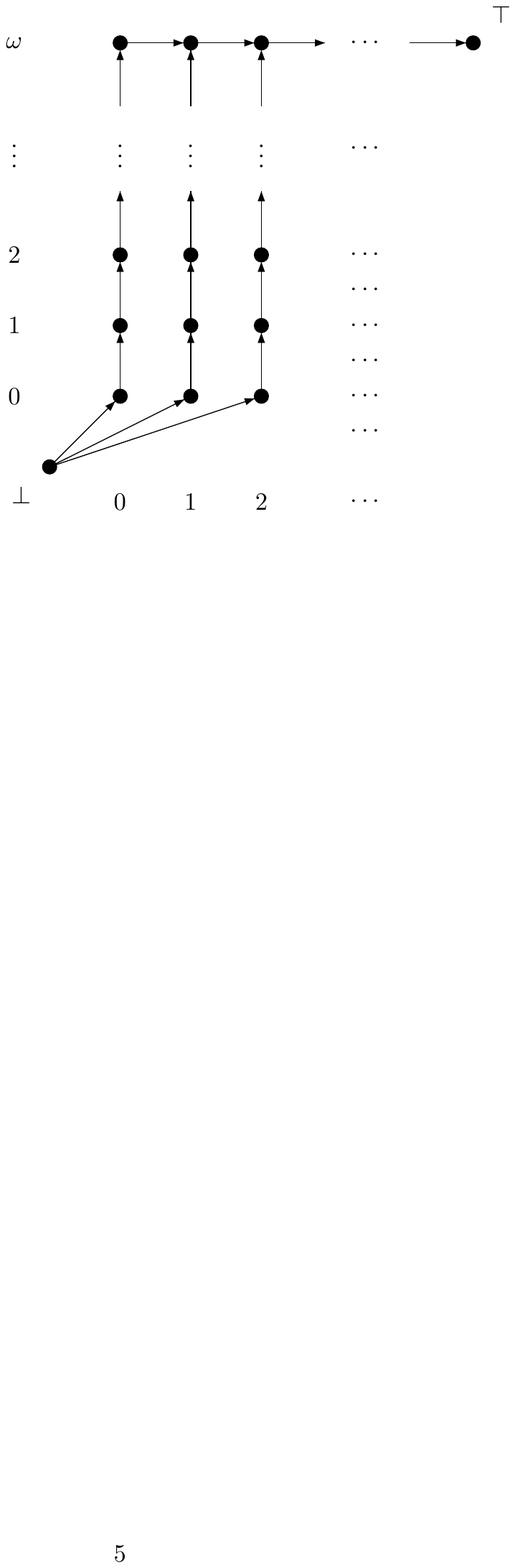}
\caption{The binary relation $\sqsubseteq_1$ on
$X_1 = \{\bot\} \cup (\IN \times (\IN \cup\{\omega\})) \cup\{\top\}$
defined in Section~\ref{section:johnstone}
after Proposition~\ref{proposition:mod-johnstone}
is the reflexive and transitive closure of the binary relation depicted in this figure.}
\label{figure:simpler-than-johnstone}
\end{center}
\end{figure}
It is clear that $(X_1,\sqsubseteq_1)$ is a poset.
By distinguishing four cases of subsets $S \subseteq X_1$,
we observe that $\sup(S)$ is defined for any subset $S\subseteq X_1$:
\begin{enumerate}
\item
If $S\subseteq\{\bot\}$ then $\sup(S)=\bot$.
\item
If $S\not\subseteq\{\bot\}$ but there exists some $i\in\IN$ such that
$S \subseteq \{\bot\} \cup (\{i\} \times (\IN \cup\{\omega\}))$ then
$\sup(S)=(i,\sup\{j \in \IN\cup\{\omega\} ~:~ (i,j) \in S\})$.
\item
If there does not exist an $i\in\IN$ with
$S \subseteq \{\bot\} \cup (\{i\} \times (\IN \cup\{\omega\}))$
but there exists a $k\in\IN$ such that
$S \subseteq \{\bot\} \cup (\{0,1,\ldots,k\} \times (\IN \cup\{\omega\}))$
then
$\sup(S)=(\max\{j \in \IN ~:~ S \cap (\{j\} \times (\IN \cup \{\omega\})) \neq \emptyset\},\omega)$.
\item
If there does not exist a $k\in\IN$ such that
$S \subseteq \{\bot\} \cup (\{0,1,\ldots,k\} \times (\IN \cup\{\omega\}))$
then $\sup(S)=\top$.
\end{enumerate}
Thus, $\sup(S)$ exists for every subset $S\subseteq X_1$.
Thus, $(X_1,\sqsubseteq_1)$ is a complete lattice.

Let us characterize the Scott open subsets of $(X_1,\sqsubseteq_1)$.
For a number $n\in\IN$ we define
\[ \IN_n:=\{m \in \IN ~:~ m \geq n\} . \]
For a number $n\in\IN$ and a function $f:\IN_n\to \IN$ we define
a subset $V_f \subseteq X_1$ as follows:
\[ V_f:= \{ (i,j) \in \IN_n\times (\IN\cup\{\omega\}) ~:~ f(i) \leq j\}
   \cup\{\top\} . \]
   
\begin{lemma}
\label{lemma:mod-johnstone-1}
\begin{enumerate}
\item
For every $n\in\IN$ and every function $f:\IN_n\to\IN$ the set $V_f$
is a Scott open subset of $(X_1,\sqsubseteq_1)$.
\item
For every Scott open subset $U\subseteq X_1$
that is neither the empty set $\emptyset$ nor the full set $X_1$
there exist some $n\in\IN$ and some function $f:\IN_n\to\IN$ with $U=V_f$. 
\end{enumerate}
\end{lemma}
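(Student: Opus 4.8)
The plan is to prove the two assertions separately, in both cases relying on the explicit four-case description of $\sup(S)$ computed just above the lemma together with the characterization of the Scott topology in Lemma~\ref{lemma:scott-topology}.

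For the first assertion I would check that each $V_f$ satisfies the two defining conditions of a Scott open set. Upward closure is routine: if $(i,j)\in V_f$ with $i\geq n$ and $f(i)\leq j$, then any $y$ with $(i,j)\sqsubseteq_1 y$ is either $\top$ (hence in $V_f$) or a pair $(k,l)$, and in the latter case the definition of $\sqsubseteq_1$ forces $k\geq i\geq n$ and $l\geq f(k)$ (either $k=i$ with $l\geq j\geq f(i)$, or $k\geq i$ with $l=\omega$), so $y\in V_f$. The directedness condition is the crux. Given a directed $S$ with $\sup(S)\in V_f$, I would run through the four cases of the supremum computation. The case $\sup(S)=\bot$ cannot occur since $\bot\notin V_f$, and the single-column case $\sup(S)=(i,m)$ is handled by noting that if $m\in\IN$ then $m$ is attained in $S$ (a bounded set of finite levels has a maximum), while if $m=\omega$ the unboundedly many finite levels occurring in $S$, or the element $(i,\omega)$ itself, reach above $f(i)$.

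The heart of the argument is the two multi-column cases, where $\sup(S)$ is either $(k,\omega)$ (case 3) or $\top$ (case 4) and membership in $V_f$ amounts to $k\geq n$ (respectively is automatic). The key observation is that a pair $(m,q)$ can dominate elements from two \emph{distinct} columns only when $q=\omega$, because the non-diagonal clause of $\sqsubseteq_1$ requires $q=\omega$. Since in both cases $S$ meets at least two distinct columns, directedness of $S$ therefore injects a \emph{limit} element $(m,\omega)$ into $S$. Choosing the two starting elements appropriately — in case 4 the occupied columns are unbounded, so one may take one of them in a column $\geq n$; in case 3 the top column $k\geq n$ is occupied alongside a strictly smaller one — produces an element $(m,\omega)\in S$ with $m\geq n$, which lies in $V_f$ (in case 4 one first disposes of the trivial possibility $\top\in S$). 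This is the step I expect to be the main obstacle, since it is exactly where the peculiar order relation of the modified Johnstone space is used.

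For the second assertion I would reconstruct $n$ and $f$ from a given Scott open $U$ with $U\neq\emptyset$ and $U\neq X_1$. Upward closure together with $U\neq\emptyset$ gives $\top\in U$, while $U\neq X_1$ gives $\bot\notin U$ (otherwise $U$ would be all of $X_1$). For each column $i$ I consider $A_i:=\{\,j\in\IN\cup\{\omega\}:(i,j)\in U\,\}$, which is an upper set of the chain $\IN\cup\{\omega\}$ by upward closure. Applying the directedness condition to the directed set $\{(i,k):k\in\IN\}$, whose supremum is $(i,\omega)$, rules out $A_i=\{\omega\}$; hence each $A_i$ is either empty or a tail $\{\,j:j\geq f(i)\,\}$ with $f(i)\in\IN$. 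Next I show that the set $D$ of columns with $A_i\neq\emptyset$ is upward closed in $\IN$: if $(i,\omega)\in U$ then $(i,\omega)\sqsubseteq_1(i+1,\omega)$ forces $(i+1,\omega)\in U$. Moreover $U$ must contain a pair, for otherwise $U=\{\top\}$, which fails the Scott condition since the directed set $\{(i,\omega):i\in\IN\}$ has supremum $\top$ yet misses $\{\top\}$; hence $D\neq\emptyset$ and $D=\IN_n$ for $n:=\min D$. Setting $f(i):=\min A_i$ on $\IN_n$ then yields $U=V_f$ upon verifying membership column by column (and recalling $\top\in U\cap V_f$, $\bot\notin U\cup V_f$).
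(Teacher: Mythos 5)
Your proof is correct and follows essentially the same route as the paper: part~1 by checking upward closure and running through the four cases of the explicit supremum computation, and part~2 by locating the least column $n$ via the directed set $\IN\times\{\omega\}$ and then the least level $f(i)$ in each column $i\geq n$ via the directed set $\{i\}\times\IN$. The paper dismisses part~1 as ``straightforward to check,'' whereas you correctly isolate the one nontrivial point --- that in the multi-column cases directedness forces an element $(m,\omega)$ with $m\geq n$ into $S$ --- which is indeed where the argument would fail for non-directed sets.
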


\begin{proof}
\begin{enumerate}
\item
Consider some $n\in\IN$ and some function $f:\IN_n\to\IN$.
It is clear that $V_f$ is upwards closed.
By going through the four cases of subsets $X \subseteq X_1$ considered above,
it is straightforward to check that,
if $S\subseteq X_1$ is a subset with $\sup(S)\in V_f$,
then $V_f\cap S \neq \emptyset$ (actually, as $\bot\not\in V_f$, the first case,
$S \subseteq \{\bot\}$, can be ignored).
\item
For the other claim, consider some Scott open subset $U\subseteq X_1$
with $U \neq \emptyset$ and $U \neq X_1$.
Then $\bot\not\in U$ and $\top\in U$.
The set $S:=\IN\times \{\omega\}$ is directed and satisfies 
$\sup(S)=\top \in U$. Hence, there exists some $n\in\IN$ with
$(n,\omega)\in U$. Let $n$ be the smallest number with this property.
Note that $U \cap (\IN\times \{\omega\}) = \IN_n \times\{\omega\}$
as $U$ is upwards closed.
Now, for each $i\geq n$ the set 
$S_i:=\{i\}\times \IN$ is directed and satisfies
$\sup(S_i) = (i,\omega) \in U$.
Hence, there exists some $m_i\in\IN$ with
$(i,m_i)\in U$. We define $f:\IN_n\to\IN$ by letting $f(i)$ be the smallest
number $m_i$ with $(i,m_i)\in U$.
As $\bot\not\in U$ and $U$ is upwards closed, we obtain $U=V_f$.
\qedhere
\end{enumerate}
\end{proof}

Let $X_2$ be the set of all Scott open subsets of  $X_1$,
and let the binary relation $\sqsubseteq_2$ on $X_2$ be set-theoretic inclusion.
Then $(X_2,\sqsubseteq_2)$ is a complete lattice \cite[Examples O-2.7(3)]{GHKLMS03}.
Let 
\[ Z := X_1 \times X_2 , \]
and let the binary relation $\sqsubseteq_\times$ on $Z$ be defined as at the beginning
of this section.
Then, according to Lemma~\ref{lemma:prod}.3, $(Z,\sqsubseteq_\times)$
is a complete lattice as well.

From now on, when we speak about {\em the product topology} on a product
of two posets, we will always mean the product of the Scott topologies
of the two posets.
Remember that by Lemma~\ref{lemma:elementof}
the set $E \subseteq Z$ defined by
\[ E := \{(x_1,x_2) \in X_1 \times X_2 ~:~ x_1 \in x_2\} \]
is open in the Scott topology on $(Z, \sqsubseteq_\times)$.

\begin{lemma}
The subset $E \subseteq Z$
is not open with respect to the product topology on $Z = X_1 \times X_2$.
\end{lemma}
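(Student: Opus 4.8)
The plan is to argue by contradiction, exhibiting a single point of $E$ that has no product-open neighborhood contained in $E$. I would take the point $(\top,O)$, where $O := (\IN\times(\IN\cup\{\omega\}))\cup\{\top\} = X_1\setminus\{\bot\}$; by Lemma~\ref{lemma:mod-johnstone-1}.1 applied with $n=0$ and the constant function $0$ on $\IN$, the set $O$ is Scott-open, and since $\top\in O$ we have $(\top,O)\in E$. If $E$ were open in the product topology, there would be a Scott-open $U_1\subseteq X_1$ with $\top\in U_1$ and a Scott-open $U_2\subseteq X_2$ with $O\in U_2$ such that $U_1\times U_2\subseteq E$. Unwinding the definition of $E$, the latter says exactly that $U_1\subseteq W$ for every $W\in U_2$. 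By Lemma~\ref{lemma:mod-johnstone-1}.2 the neighborhood $U_1$ is either $X_1$ or of the form $V_f$ for some $n\in\IN$ and $f:\IN_n\to\IN$.

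The heart of the argument is to construct from $U_1$ a directed family of Scott-open sets whose supremum in $X_2$ (that is, whose union) equals $O$, but none of whose members contains $U_1$; feeding this family into the Scott-openness of $U_2$ will produce the contradiction. Treating the main case $U_1=V_f$ (in the degenerate case $U_1=X_1$ one simply takes $f$ to be the constant function $0$), I would set $m_k:=\max(k,n)$ and define $g_k:\IN\to\IN$ by $g_k(i):=0$ for $i<m_k$ and $g_k(i):=f(i)+1$ for $i\geq m_k$, and put $D:=\{V_{g_k}:k\in\IN\}$. Each $V_{g_k}$ is Scott-open by Lemma~\ref{lemma:mod-johnstone-1}.1. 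A short check shows that $(g_k)_k$ is pointwise decreasing, so $D$ is an increasing chain and in particular directed, and that the thresholds $m_k$ tend to infinity, so that $\bigcup_k V_{g_k}=O$; hence $\sup D=O$ in $X_2$. At the same time, for every $k$ the point $(m_k,f(m_k))$ belongs to $V_f$ but not to $V_{g_k}$, since $g_k(m_k)=f(m_k)+1>f(m_k)$; thus $U_1=V_f\not\subseteq V_{g_k}$ for all $k$ (in the case $U_1=X_1$ one uses instead that $\bot\notin V_{g_k}$).

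To finish, I would invoke the Scott-openness of $U_2$: since $D$ is directed with $\sup D=O\in U_2$, some member $V_{g_k}$ lies in $U_2$. The reformulation of $U_1\times U_2\subseteq E$ then forces $U_1\subseteq V_{g_k}$, contradicting $U_1\not\subseteq V_{g_k}$. This contradiction establishes that $E$ is not open in the product topology.

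I expect the main obstacle to be getting the directed family right. The naive idea of deleting a single point from $V_f$ fails: a family of open sets each omitting one fixed point of $V_f$ is not directed, because already the union of two of its members recovers all of $O\supseteq V_f$. The moving-threshold family above resolves this by letting the omitted points $(m_k,f(m_k))$ escape to infinity, which is exactly what reconciles directedness (here even a chain) with the requirement that no single member contain $U_1$. The only genuine computation is the verification that this chain is increasing and that its union is precisely $O$.
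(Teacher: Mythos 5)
Your proof is correct and follows essentially the same strategy as the paper's: pick a point of $E$ whose second coordinate is $X_1\setminus\{\bot\}$, identify the first-coordinate neighborhood as some $V_f$ via Lemma~\ref{lemma:mod-johnstone-1}, and build an increasing chain $V_{g_k}$ of Scott opens with union $X_1\setminus\{\bot\}$ none of which contains $V_f$, the omitted points escaping to infinity. The only differences are cosmetic: the paper uses the witness $((0,0),V_{\tilde 0})$, which pins down $n=0$ and so avoids your $\max(k,n)$ threshold, while you use $(\top,V_{\tilde 0})$ and in return treat the degenerate case $U_1=X_1$ explicitly, which the paper in fact glosses over.
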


\begin{proof}
First, let us choose an element of $E$ as follows.
Let $\tilde{0}:\IN\to\IN$ be the constant number function with $\tilde{0}(n)=0$
for all $n\in\IN$. Then, according to Lemma~\ref{lemma:mod-johnstone-1},
the set $x_2:=  V_{\tilde{0}} = X_1 \setminus\{\bot\}$ is 
a Scott open subset of $X_1$. The element $x_1 := (0,0) \in X_1$ is an element of $x_2$.
Thus, $(x_1,x_2) \in E$.

Now, for the sake of a contradiction, let us assume that $E$ is open in the product
topology. Then there exist Scott open subsets $U\subseteq X_1$ and
$\mathcal{V} \subseteq X_2$ with $(x_1,x_2) \in U \times \mathcal{V}$
and $U \times \mathcal{V} \subseteq E$.
According to Lemma~\ref{lemma:mod-johnstone-1} there exist some $n\in\IN$
and some function $f:\IN_n\to \IN$ such that $U=V_f$.
As $(0,0)=x_1 \in U$, the number $n$ must be equal to $0$. 
For $i\in\IN$ we define
$g_i:\IN\to\IN$ by
\[ g_i(m) := \begin{cases}
       0 & \text{ if } m < i, \\
       f(m)+1 & \text{ if } m \geq i .
       \end{cases} \]
Then the sequence $(V_{g_i})_{i\in\IN}$ is an increasing sequence
of Scott open subsets of $X_1$
with 
\[ \sup( \{V_{g_i} ~:~ i \in\IN\}) = \bigcup_{i\in\IN} V_{g_i}
    = V_{\tilde{0}} = X_1 \setminus\{\bot\} .\]
Hence, there must exist some $i\in\IN$ with $V_{g_i} \in \mathcal{V}$.
As $U \times \mathcal{V} \subseteq E$, we obtain
$(x,V_{g_i}) \in E$, for all $x \in U$, hence $x \in V_{g_i}$, for all $x \in U$.
That means, $U \subseteq V_{g_i}$.
But this is false!
\end{proof}

\begin{proof}[Proof of Proposition~\ref{proposition:mod-johnstone}.]
Consider the complete lattices $(X_1,\sqsubseteq_1)$ and $(X_2,\sqsubseteq_2)$
defined above.
According to Lemma~\ref{lemma:prod}.4 the Scott topology on $X_1 \times X_2$
is at least as fine as the product topology on $X_1 \times X_2$.
And we have seen that there exists a subset 
$E \subseteq X_1 \times X_2$ that is open in the Scott topology on $X_1 \times X_2$ but
not open in the product topology on $X_1 \times X_2$.
\end{proof}

\section{Discontinuity of the Supremum Function with respect to the
Product Topology on a Complete Lattice}
\label{section:sup}

Let $(X,\sqsubseteq)$ be a sup semilattice.
Let $\supt:X\times X \to X$ be defined by
\[ \supt(x,y) := \sup\{x,y\} . \]
The following observation is elementary and well known.
We omit its proof.

\begin{proposition}
\label{proposition:sup-scott-cont}
Let $(X,\sqsubseteq)$ be a sup semilattice.
Then the function $\sup^{(2)}:X \times X\to X$ is continuous with respect
to the Scott topology on $X\times X$ (on the left hand side)
and the Scott topology on $X$ (on the right hand side).
\end{proposition}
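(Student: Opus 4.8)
The plan is to verify continuity directly from the characterization of the Scott topology in Lemma~\ref{lemma:scott-topology}: it suffices to show that for every Scott open set $O \In X$ the preimage $\supt^{-1}(O) \In X\times X$ satisfies the two defining conditions of a Scott open subset of $(X\times X,\sqsubseteq_\times)$. The whole argument rests on the monotonicity of $\supt$, which I would establish first. If $(x,y)\sqsubseteq_\times (x',y')$ then $x\sqsubseteq x'\sqsubseteq \supt(x',y')$ and $y\sqsubseteq y'\sqsubseteq \supt(x',y')$, so $\supt(x',y')$ is an upper bound of $\{x,y\}$ and hence $\supt(x,y)\sqsubseteq \supt(x',y')$.

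Upward closedness of $\supt^{-1}(O)$ is then immediate: if $\supt(x,y)\in O$ and $(x,y)\sqsubseteq_\times (x',y')$, monotonicity gives $\supt(x,y)\sqsubseteq \supt(x',y')$, and since $O$ is upwards closed, $\supt(x',y')\in O$. For the second condition, let $S\In X\times X$ be directed with existing supremum and $\supt(\sup S)\in O$. The key preliminary step is to describe directed suprema in the product order: writing $S_1$ and $S_2$ for the projections of $S$ to the two factors, both $S_1$ and $S_2$ are directed in $(X,\sqsubseteq)$, and I would show that $\sup S=(\sup S_1,\sup S_2)$ with both coordinate suprema existing. This follows because a pair $(u,v)$ is an upper bound of $S$ if and only if $u$ is an upper bound of $S_1$ and $v$ is an upper bound of $S_2$, so the least upper bound in the product corresponds coordinatewise to the least upper bounds in each factor. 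Since $\supt$ is monotone, the image $T:=\{\supt(s) ~:~ s\in S\}$ is then a directed subset of $X$.

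The computational heart of the argument is the identity $\supt(\sup S)=\sup\{\sup S_1,\sup S_2\}=\sup T$, which I would prove by a double inequality. On the one hand, $\sup\{\sup S_1,\sup S_2\}$ is an upper bound of $T$, because every $(x,y)\in S$ satisfies $x\sqsubseteq \sup S_1$ and $y\sqsubseteq \sup S_2$, so $\supt(x,y)\sqsubseteq \sup\{\sup S_1,\sup S_2\}$. On the other hand, any upper bound $u$ of $T$ dominates every coordinate of every element of $S$, hence dominates $\sup S_1$ and $\sup S_2$ and therefore their supremum. Given this identity, $T$ is a directed subset of $X$ whose supremum lies in $O$, so by the second condition of Lemma~\ref{lemma:scott-topology} applied in $X$ we obtain $T\cap O\neq\emptyset$; any witness $(x,y)\in S$ with $\supt(x,y)\in O$ then lies in $S\cap\supt^{-1}(O)$, completing the verification. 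The only place that demands genuine care is the identification of $\sup S$ with the pair of coordinate suprema and the matching of upper bounds across the two order structures; everything else is routine monotonicity bookkeeping. Alternatively, one could shortcut the argument by invoking the standard fact that a monotone map between posets is Scott continuous precisely when it preserves all existing directed suprema, but I prefer the self-contained route through Lemma~\ref{lemma:scott-topology}.
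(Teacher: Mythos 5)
Your proof is correct and complete; the paper itself omits the proof of this proposition as ``elementary and well known,'' and your argument --- monotonicity of $\supt$, the identification of a directed supremum in $(X\times X,\sqsubseteq_\times)$ with the pair of coordinatewise suprema, and the verification that $\supt^{-1}(O)$ satisfies the two conditions of Lemma~\ref{lemma:scott-topology} --- is exactly the standard argument one would supply. No gaps: in particular, your double-inequality step correctly establishes both that $\sup T$ exists and that it equals $\supt(\sup S)$, which is the one point requiring care.
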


In view of this observation the natural question arises whether
the function $\sup^{(2)}:X \times X\to X$ is still continuous if on $X\times X$
one does not consider the Scott topology but the product topology of the 
Scott topologies on each copy of $X$. 
This question has also been discussed in \cite{GHKLMS03}; see the
quotation in the introduction.
But it seems that this question has not been answered so far in the general case;
see the discussion in the introduction.
It is the main goal of this section to show that the answer to this question is in general no.

\begin{theorem}
\label{theorem:main}
There exists a complete lattice $(Z,\sqsubseteq)$ such that the binary supremum function
$\supt:Z \times Z \to Z$ is not continuous with respect 
to the product topology on $Z\times Z$ of the Scott topology on each copy of $Z$
(on the left hand side) and the Scott topology on $Z$ (on the right hand side).
\end{theorem}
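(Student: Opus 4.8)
The plan is to prove the theorem \emph{without} constructing a new lattice: I claim that the complete lattice $Z := X_1 \times X_2$ from Proposition~\ref{proposition:mod-johnstone} already works. On this $Z$ the set $E = \{(x_1,x_2) : x_1 \in x_2\}$ is open in the Scott topology $\sigma(Z)$ on $(Z,\sqsubseteq_\times)$ (Lemma~\ref{lemma:elementof}) but is \emph{not} open in the product topology on $Z = X_1 \times X_2$. Said differently, the identity map from $(Z,\text{product topology})$ to $(Z,\sigma(Z))$ is \emph{discontinuous}, with $E$ as witnessing open set. The idea is to factor this discontinuous identity through $\supt$ by means of a section that \emph{is} continuous for the product topologies; the discontinuity is then forced onto $\supt$.

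Concretely, let $\bot$ denote the least element of $X_1$ and $\emptyset$ the least element of $X_2$ (the empty Scott-open set), and define $\delta : X_1 \times X_2 \to Z \times Z$ by
\[ \delta(a,U) := \bigl((a,\emptyset),\,(\bot,U)\bigr) . \]
Since suprema in the product lattice $Z = X_1\times X_2$ are computed coordinatewise, $\supt(\delta(a,U)) = \sup\{(a,\emptyset),(\bot,U)\} = (a\vee\bot,\ \emptyset\cup U) = (a,U)$, so that $\supt\circ\delta$ is the identity on $Z$. It then remains to check that $\delta$ is continuous when $X_1\times X_2$ carries the product topology and $Z\times Z$ carries the product of two copies of $\sigma(Z)$. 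For this it suffices that the two coordinate insertions $j_1:X_1\to Z$, $a\mapsto(a,\emptyset)$, and $j_2:X_2\to Z$, $U\mapsto(\bot,U)$, are Scott-continuous: each is monotone and preserves suprema of directed sets (again because suprema in $Z$ are coordinatewise), and a monotone map between dcpos that preserves directed suprema is continuous for the Scott topologies. Composing $j_1,j_2$ with the continuous projections of the product topology on $X_1\times X_2$ then shows that both components of $\delta$, hence $\delta$ itself, are continuous.

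With these two facts the theorem follows by contradiction. Suppose $\supt$ were continuous as a map from $(Z\times Z,\ \text{product of }\sigma(Z))$ to $(Z,\sigma(Z))$. Then $\supt\circ\delta$ would be continuous from $(Z,\text{product topology})$ to $(Z,\sigma(Z))$; but $\supt\circ\delta$ is the identity, so every Scott-open subset of $Z$ would be open in the product topology, contradicting the non-openness of $E$. Hence $\supt$ is not continuous for the product topology. I expect the only point needing care to be the continuity of $\delta$, i.e.\ the verification that $j_1,j_2$ preserve directed suprema, which is routine; all the genuine difficulty has already been isolated in Proposition~\ref{proposition:mod-johnstone}, in producing a complete lattice on which $\sigma(Z)$ is strictly finer than the product topology. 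The one modelling decision is the choice of the splitting map $\delta$, whose defining feature is that it uses the two coordinates separately, so as to be product-continuous, while its composite with $\supt$ reconstitutes the identity.
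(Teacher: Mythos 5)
Your proposal is correct and is essentially the paper's own argument in a slicker packaging: the paper also takes $Z=X_1\times X_2$ from Proposition~\ref{proposition:mod-johnstone}, uses the identity $\supt\bigl((x,\bot_2),(\bot_1,y)\bigr)=(x,y)$, and shows that product-openness of $(\supt)^{-1}(E)$ would force $E$ to be product-open, by constructing exactly the sets $E_1=j_1^{-1}(D_1)$ and $E_2=j_2^{-1}(D_2)$ and verifying by hand that the insertions $j_1,j_2$ pull Scott-opens back to Scott-opens. Your formulation via the product-continuous section $\delta$ with $\supt\circ\delta=\mathrm{id}_Z$ is just the composed, coordinate-free version of the same proof.
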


Note that due to Proposition~\ref{proposition:sup-scott-cont}
and Lemma~\ref{lemma:prod}.(4)
this implies that the Scott topology on $Z \times Z$ is strictly finer
than the product topology on $Z \times Z$.
Thus, this theorem improves Proposition~\ref{prop:johnstone}
and Proposition~\ref{proposition:mod-johnstone}.

\begin{proof}
According to Proposition~\ref{proposition:mod-johnstone} there exist
complete lattices $(X_1,\sqsubseteq_1)$ and $(X_2,\sqsubseteq_2)$
such that the Scott topology on $X_1\times X_2$ is strictly finer than
the product topology on $X_1\times X_2$.
We choose such $X_1$ and $X_2$ and define $Z:=X_1\times X_2$.
Note that due to Lemma~\ref{lemma:prod}, $(Z,\sqsubseteq_\times)$ is a
complete lattice as well.
Let $E\subseteq Z = X_1\times X_2$ be a subset
that is open in the Scott topology on $X_1\times X_2$
but not open in the product topology on $X_1\times X_2$.
We are going to show that the preimage
\[ D:= (\supt)^{-1}(E) = \{ (z,z') \in Z \times Z ~:~ \supt(z,z') \in E\} \]
of $E$ under the function $\supt$ is not open in the product topology on $Z\times Z$
(of the Scott topologies on each copy of $Z$).
For the sake of a contradiction, let us assume that $D$
is open in the product topology on $Z\times Z$.
We are going to show that this assumption would imply that $E$ is open in 
the product topology on $Z = X_1 \times X_2$.

Let us fix some element $(x_1,x_2) \in E$.
It is sufficient to show that the assumption that $D$
is open in the product topology on $Z\times Z$ implies that
there exist a Scott open subset $E_1 \subseteq X_1$
and a Scott open subset $E_2 \subseteq X_2$ with $x_1 \in E_1$, with $x_2 \in E_2$
and with $E_1 \times E_2 \subseteq E$.
How do we arrive at such sets?
Let us write $\bot_i := \inf(X_i)$, for $i=1,2$.
Note that the element $(x_1,x_2) \in E \subseteq Z = X_1\times X_2$
is the supremum of the elements $(x_1,\bot_2)$ and $(\bot_1,x_2)$. Thus,
\[ ( (x_1,\bot_2), (\bot_1,x_2) ) \in D . \]
Our assumption that $D$
is open in the product topology on $Z\times Z$ implies
that there exist two Scott open subsets $D_1,D_2 \subseteq Z$ with 
$(x_1,\bot_2) \in D_1$, with $(\bot_1,x_2) \in D_2$, and with 
$D_1 \times D_2 \subseteq D$.
We define
\begin{eqnarray*}
  E_1 &:=& \{x \in X_1 ~:~ (x,\bot_2) \in D_1 \}, \\
  E_2 &:=& \{y \in X_2 ~:~ (\bot_1,y) \in D_2 \}.
\end{eqnarray*}
It is clear that $x_1 \in E_1$. As $D_1$ is upwards closed, the set $E_1$
is upwards closed as well. If $S\subseteq X_1$ is a directed set with 
$\sup(S) \in E_1$ then the set $\{(x,\bot_2) ~:~ x \in S\} \subseteq Z$
is a directed set as well, and its supremum exists and is an element of $D_1$.
As $D_1$ is Scott open we conclude that there exists some
$x\in S$ with $(x,\bot_2) \in D_1$, hence, with $x\in E_1$.
This shows that $E_1 \subseteq X_1$ is a Scott open set with $x_1 \in E_1$.
In the same way one shows that $E_2 \subseteq X_2$
is a Scott open set with $x_2 \in E_2$.
Finally, we claim $E_1 \times E_2 \subseteq E$.
Consider some $x \in E_1$ and $y \in E_2$.
Then $(x,\bot_2) \in D_1$ and $(\bot_1,y) \in D_2$, hence,
\[ ( (x,\bot_2), (\bot_1,y) ) \in D_1 \times D_2 \subseteq D . \]
This implies 
\[ \supt  ( (x,\bot_2), (\bot_1,y) ) \in E . \]
On the other hand we calculate
\[ \supt  ( (x,\bot_2), (\bot_1,y) )
  = (\supt(x,\bot_1),\supt(\bot_2,y))
  = (x,y) . \]
Thus, we obtain $(x,y) \in E$. This shows $E_1 \times E_2 \subseteq E$ and
ends the proof of Theorem~\ref{theorem:main}.
\end{proof}

\section{Failure of Bounded Completeness for the
Directed Complete Partial Order of Scott Continuous Functions}
\label{section:bc}

Consider now some poset $(Z,\sqsubseteq)$ and an arbitrary topological space $X$.
We call a function $f:X\to Z$ {\em Scott continuous} if it is continuous with respect
to the given topology on $X$ and the Scott topology on $Z$.
Let $\mathrm{C}(X,Z)$ denote the set of all Scott continuous functions $f:X\to Z$.
On this set we define a binary relation $\sqsubseteq_\mathrm{C}$ by
\[ f \sqsubseteq_\mathrm{C} g :\iff (\forall x \in X)\ f(x) \sqsubseteq g(x) .\]

\begin{proposition}[{\cite[Lemma 1-4.6]{KL14}, \cite[Prop. 6]{Her16a}}]
\label{proposition:function-dcpo}
Let $(Z,\sqsubseteq)$ be a dcpo, 
and let $X$ be an arbitrary topological space.
Then $\mathrm{C}(X,Z)$ with $\sqsubseteq_\mathrm{C}$ is a dcpo.
Furthermore, if $F\In \mathrm{C}(X,Z)$ is a $\sqsubseteq_\mathrm{C}$-directed set 
then the function $g:X\to Z$ defined by 
\[ g(x):=\sup( \{f(x) ~:~ f\in F\}) \]
is Scott continuous and the least upper bound of $F$.
\end{proposition}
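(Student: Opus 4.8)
The plan is to prove the two halves of the statement in sequence: first that the pointwise-supremum function $g$ is Scott continuous (so that $g\in\mathrm{C}(X,Z)$ at all), and then that $g$ is the least upper bound of $F$ in the dcpo $(\mathrm{C}(X,Z),\sqsubseteq_\mathrm{C})$. The whole argument rests on the fact that $(Z,\sqsubseteq)$ is a dcpo, so the supremum $\sup(\{f(x):f\in F\})$ actually exists: since $F$ is $\sqsubseteq_\mathrm{C}$-directed, for each fixed $x\in X$ the set $\{f(x):f\in F\}\subseteq Z$ is directed (given $f,f'\in F$ pick an upper bound $h\in F$, and then $h(x)$ bounds $f(x),f'(x)$), so its supremum exists and $g$ is well defined.

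First I would verify Scott continuity of $g$. It suffices to show that for every Scott open $O\subseteq Z$ the preimage $g^{-1}(O)$ is open in $X$. Fix $x_0\in g^{-1}(O)$, so $g(x_0)=\sup(\{f(x_0):f\in F\})\in O$. Because $\{f(x_0):f\in F\}$ is directed and $O$ is Scott open, condition (2) of Lemma~\ref{lemma:scott-topology} yields some $f\in F$ with $f(x_0)\in O$; that is, $x_0\in f^{-1}(O)$. Now $f$ is Scott continuous, so $f^{-1}(O)$ is an open neighbourhood of $x_0$ in $X$. The key point is that this whole neighbourhood lands in $g^{-1}(O)$: for any $x\in f^{-1}(O)$ we have $f(x)\in O$ and $f(x)\sqsubseteq g(x)$ (since $g(x)$ is an upper bound of $\{f'(x):f'\in F\}\ni f(x)$), and $O$ is upwards closed, so $g(x)\in O$. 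Hence $f^{-1}(O)\subseteq g^{-1}(O)$, showing $g^{-1}(O)$ is open. I expect this step to be the main obstacle, mainly in recognising that directedness of $F$ is exactly what is needed to invoke the openness condition of the Scott topology, and in exploiting upward closure to inflate the single neighbourhood $f^{-1}(O)$ up to a neighbourhood witnessing $g(x)\in O$.

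Having established $g\in\mathrm{C}(X,Z)$, I would then check that $g=\sup(F)$ in $(\mathrm{C}(X,Z),\sqsubseteq_\mathrm{C})$. That $g$ is an upper bound of $F$ is immediate: for every $f\in F$ and every $x\in X$ we have $f(x)\sqsubseteq \sup(\{f'(x):f'\in F\})=g(x)$, so $f\sqsubseteq_\mathrm{C} g$. For the least-upper-bound property, suppose $h\in\mathrm{C}(X,Z)$ satisfies $f\sqsubseteq_\mathrm{C} h$ for all $f\in F$. Then for each fixed $x\in X$, the element $h(x)$ is an upper bound in $Z$ of the set $\{f(x):f\in F\}$, so by the defining property of the supremum we get $g(x)=\sup(\{f(x):f\in F\})\sqsubseteq h(x)$. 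Since this holds for every $x\in X$, we conclude $g\sqsubseteq_\mathrm{C} h$, which is precisely the statement that $g$ is the least upper bound of $F$. This second half is a routine pointwise argument and needs nothing beyond the definitions of $\sqsubseteq_\mathrm{C}$ and of supremum in $Z$.
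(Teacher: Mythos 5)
Your proof is correct and complete: the well-definedness of $g$ via directedness of $\{f(x):f\in F\}$, the openness of $g^{-1}(O)$ by covering it with the sets $f^{-1}(O)$ using the Scott-openness of $O$ and upward closure, and the pointwise least-upper-bound argument are exactly the standard reasoning. The paper itself gives no proof of this proposition (it only cites \cite{KL14} and \cite{Her16a}), and your argument matches the standard proof found in those sources.
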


Let $(Z,\sqsubseteq)$ be a poset.
\begin{itemize}
\item
A subset $S \In Z$ is called {\em bounded} if there exists an upper bound $z\in Z$ for $S$.
\item
The poset $(Z,\sqsubseteq)$ is called {\em bounded complete}
if for any bounded subset $S \In Z$ there exists a supremum of $S$ in $Z$.
\item
An element of $Z$ is called a {\em least element} of $Z$ if it is a lower
bound of $Z$. Obviously, if a least element exists then it is unique.
\end{itemize}

Note that any bounded complete poset $(Z,\sqsubseteq)$
has a  least element
(the element $\sup(\emptyset)$ does the job).
And note that any complete lattice is a dcpo and that
a poset is a complete lattice if, and only if, it is bounded and bounded complete.

\begin{corollary}
\label{cor:bounded}
Let $X$ be an arbitrary topological space.
\begin{enumerate}
\item
If $(Z,\sqsubseteq)$ is a bounded dcpo then
$(C(X,Z),\sqsubseteq_\mathrm{C})$ is a bounded dcpo as well.
\item
If $(Z,\sqsubseteq)$ is a dcpo with least element then
$(C(X,Z),\sqsubseteq_\mathrm{C})$ is a dcpo with least element as well.
\end{enumerate}
\end{corollary}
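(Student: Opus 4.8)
The plan is to derive both statements as consequences of Proposition~\ref{proposition:function-dcpo}, which already guarantees that $(C(X,Z),\sqsubseteq_{\mathrm C})$ is a dcpo whenever $(Z,\sqsubseteq)$ is. So the only additional work is to transfer the extra structure—boundedness in part~(1), existence of a least element in part~(2)—from $Z$ to the function space $C(X,Z)$. In both cases the natural candidate witness is a \emph{constant function}, so the first thing I would check is that constant functions are Scott continuous, i.e.\ that they actually lie in $C(X,Z)$.

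For part~(2), suppose $\bot_Z$ is the least element of $Z$. I would define $c_{\bot}:X\to Z$ to be the constant function with value $\bot_Z$ and argue it is Scott continuous: the preimage of any Scott open $O\In Z$ is either $X$ (if $\bot_Z\in O$) or $\emptyset$ (if $\bot_Z\notin O$), both of which are open in $X$, so $c_{\bot}\in C(X,Z)$. Then for any $f\in C(X,Z)$ and any $x\in X$ we have $c_{\bot}(x)=\bot_Z\sqsubseteq f(x)$, so $c_{\bot}\sqsubseteq_{\mathrm C} f$ by definition of $\sqsubseteq_{\mathrm C}$; hence $c_{\bot}$ is a least element of $C(X,Z)$. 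Combined with the dcpo property from Proposition~\ref{proposition:function-dcpo}, this proves~(2).

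Part~(1) is entirely parallel. If $(Z,\sqsubseteq)$ is bounded, let $\top_Z$ be an upper bound of $Z$ and let $c_{\top}$ be the constant function with value $\top_Z$; the same preimage argument shows $c_{\top}\in C(X,Z)$. For any $f\in C(X,Z)$ and any $x\in X$ we have $f(x)\sqsubseteq\top_Z=c_{\top}(x)$, so $f\sqsubseteq_{\mathrm C}c_{\top}$, showing $c_{\top}$ is an upper bound of the whole function space. Thus $(C(X,Z),\sqsubseteq_{\mathrm C})$ is bounded, and it is a dcpo by Proposition~\ref{proposition:function-dcpo}, giving~(1).

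I do not expect a genuine obstacle here: the argument is routine once one settles on constant functions as the witnesses. The only point requiring a moment's care is the Scott continuity of constant functions, but this is immediate from the fact that a constant map has only two possible preimages ($X$ or $\emptyset$), independent of the topologies involved. Everything else is an unwinding of the pointwise definition of $\sqsubseteq_{\mathrm C}$.
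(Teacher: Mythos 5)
Your proposal is correct and follows essentially the same route as the paper: invoke Proposition~\ref{proposition:function-dcpo} for the dcpo structure and then use the constant functions with values $\top$ and $\bot$ as the upper bound and least element of $C(X,Z)$, respectively. The paper merely asserts the Scott continuity of these constant functions, whereas you spell out the (trivial) preimage argument; there is no substantive difference.
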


\begin{proof}
Let $(Z,\sqsubseteq)$ be a dcpo.
By Prop.~\ref{proposition:function-dcpo},
$(C(X,Z),\sqsubseteq_\mathrm{C})$ is a dcpo.
\begin{enumerate}
\item
Let $\top \in Z$ be an upper bound for $Z$.
The constant function $f:X\to Z$ with $f(x):=\top$ for all $x\in X$
is Scott continuous, thus, an element of $C(X,Z)$, and it is an upper bound
for $C(X,Z)$.
\item
Let $\bot \in Z$ be a lower bound for $Z$.
The constant function $f:X\to Z$ with $f(x):=\bot$ for all $x\in X$
is Scott continuous, thus, an element of $C(X,Z)$, and it is a lower bound
for $C(X,Z)$.
\qedhere
\end{enumerate}
\end{proof}

The following question arises:
If $(Z,\sqsubseteq)$ is bounded complete, is then $C(X,Z)$ bounded complete as well?
This is true if the topology considered
on $X$ is the Scott topology induced by a partial order relation on $X$.

\begin{proposition}
\label{prop:bounded-complete}
Let $(X,\sqsubseteq_X)$ be a poset and consider on $X$ the Scott topology.
If $(Z,\sqsubseteq)$ is a bounded complete dcpo then
$(C(X,Z),\sqsubseteq_\mathrm{C})$ is a bounded complete dcpo as well.
\end{proposition}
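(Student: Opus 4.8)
The plan is to use the pointwise supremum as the candidate least upper bound and to verify that it lands in $C(X,Z)$ precisely because $X$ carries a Scott topology. By Proposition~\ref{proposition:function-dcpo} we already know that $(C(X,Z),\sqsubseteq_\mathrm{C})$ is a dcpo, so only bounded completeness remains. Let $F\subseteq C(X,Z)$ be bounded, say $f\sqsubseteq_\mathrm{C} h$ for all $f\in F$ and some $h\in C(X,Z)$. For each $x\in X$ the set $\{f(x):f\in F\}$ is bounded above by $h(x)$ in $Z$, so by bounded completeness of $Z$ its supremum exists; I define $g:X\to Z$ by $g(x):=\sup(\{f(x):f\in F\})$. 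Once $g$ is shown to be Scott continuous, it is immediate that $g\in C(X,Z)$ is the $\sqsubseteq_\mathrm{C}$-least upper bound of $F$: it dominates every $f$ pointwise, and any pointwise upper bound of $F$ dominates $g$ pointwise by the defining property of the supremum in $Z$.

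The heart of the argument --- and the step where the hypothesis on $X$ is indispensable --- is the Scott continuity of $g$. Here I would invoke the standard characterization that, since both $X$ and $Z$ carry Scott topologies, a map $X\to Z$ is Scott continuous if and only if it is monotone and sends the supremum of every directed set that admits one to the supremum of its image. This equivalence can be checked directly against Lemma~\ref{lemma:scott-topology}, using that for any $t\in Z$ the down-set $\{y\in Z:y\sqsubseteq t\}$ is Scott closed; it is exactly this reduction of continuity to an order-theoretic condition that fails for a general topology on $X$, which is why the counterexample of the next section is possible. Monotonicity of $g$ is easy: if $x\sqsubseteq_X x'$ then $f(x)\sqsubseteq f(x')\sqsubseteq g(x')$ for every $f\in F$, so $g(x')$ is an upper bound of $\{f(x):f\in F\}$ and hence $g(x)\sqsubseteq g(x')$.

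For preservation of directed suprema, let $S\subseteq X$ be directed with $s:=\sup(S)$ existing in $X$. Monotonicity makes $\{g(t):t\in S\}$ directed and bounded above by $g(s)$, so in the dcpo $Z$ its supremum exists and satisfies $\sup(\{g(t):t\in S\})\sqsubseteq g(s)$. For the reverse inequality I would interchange the two suprema. Each $f\in F$ is Scott continuous, so $f(s)=\sup(\{f(t):t\in S\})$; since $f(t)\sqsubseteq g(t)\sqsubseteq \sup(\{g(t):t\in S\})$ for every $t\in S$, this yields $f(s)\sqsubseteq \sup(\{g(t):t\in S\})$. As this holds for all $f\in F$, the element $\sup(\{g(t):t\in S\})$ is an upper bound of $\{f(s):f\in F\}$, whence $g(s)=\sup(\{f(s):f\in F\})\sqsubseteq \sup(\{g(t):t\in S\})$. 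Thus $g(s)=\sup(\{g(t):t\in S\})$, and $g$ is Scott continuous.

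The main obstacle is exactly this continuity verification: the pointwise supremum is the only reasonable candidate, and checking that it is a least upper bound is routine, so what really makes the statement work is that the Scott topology on $X$ reduces continuity to monotonicity together with preservation of directed joins, a property the pointwise supremum inherits from the members of $F$ via the double-supremum interchange above. I expect the only delicate point to be stating the continuity characterization carefully for a source poset $X$ that need not itself be a dcpo, i.e.\ restricting attention to those directed subsets of $X$ whose suprema happen to exist in $X$.
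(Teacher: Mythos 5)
Your proposal is correct and follows essentially the same route as the paper: define the pointwise supremum $g$, invoke the order-theoretic characterization of Scott continuity between posets (the paper's Lemma~\ref{lemma:scott-continuous}), verify monotonicity, and establish preservation of directed suprema by the same interchange-of-suprema argument. The least-upper-bound verification at the end also matches the paper's.
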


This proposition is most easily shown using the following fundamental
characterization of Scott continuous functions between partial orders.

\begin{lemma}[{see, e.g., \cite[Prop.~4.3.5]{Gou13}}]
\label{lemma:scott-continuous}
Let $(X,\sqsubseteq_X)$ and $(Z,\sqsubseteq_Z)$ be partial orders.
For a function $f:X\to Z$ the following two conditions are equivalent.
\begin{enumerate}
\item
$f$ is Scott continuous, that is, continuous with respect to the Scott topology on $X$
and the Scott topology on $Z$.
\item
$f$ is monotone (that is, $(\forall x,x' \in X) (x \sqsubseteq_X x'
\Rightarrow f(x) \sqsubseteq_Z f(x')$) and, if $S\subseteq X$
is a directed set whose supremum $\sup(X)$ exists, then $\sup(f(X))$
exists as well and satisfies $\sup(f(X))=f(\sup(X))$.
\end{enumerate}
\end{lemma}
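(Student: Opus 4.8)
The plan is to prove the two implications separately, using as the sole engine the observation that for every element $a$ of a partial order the principal down-set $\downarrow a := \{z : z \sqsubseteq_Z a\}$ is Scott closed, so that its complement is a Scott open set. Indeed, $\downarrow a$ is downward closed by definition, and if a directed set $S \subseteq \downarrow a$ has a supremum then $\sup(S) \sqsubseteq_Z a$, so $\sup(S) \in \downarrow a$; hence the complement $Z \setminus \downarrow a$ is upward closed and meets every directed set whose supremum it contains, i.e. it is Scott open by Lemma~\ref{lemma:scott-topology}. These complements are exactly the open sets that allow topological continuity to ``see'' the order relation on $Z$.

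For the implication $(1)\Rightarrow(2)$, I would assume $f$ is topologically continuous. To obtain monotonicity, suppose $x \sqsubseteq_X x'$ but, for contradiction, $f(x) \not\sqsubseteq_Z f(x')$. Then $f(x)$ lies in the Scott open set $O := Z \setminus \downarrow f(x')$, so $x \in f^{-1}(O)$; since $f^{-1}(O)$ is Scott open, hence upward closed, we get $x' \in f^{-1}(O)$ and therefore $f(x') \in O$, contradicting $f(x') \in \downarrow f(x')$. For preservation of directed suprema, let $S \subseteq X$ be directed with $\sup(S)$ existing. Monotonicity already makes $f(S)$ directed and makes $f(\sup(S))$ an upper bound of $f(S)$. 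It then remains to show that $f(\sup(S))$ is the \emph{least} upper bound: given any upper bound $y$ of $f(S)$, I apply continuity to $O := Z \setminus \downarrow y$. If $f(\sup(S)) \in O$ then $\sup(S) \in f^{-1}(O)$, and since $f^{-1}(O)$ is Scott open and $S$ is directed with supremum inside it, condition (2) of Lemma~\ref{lemma:scott-topology} yields some $s \in S$ with $f(s) \in O$, i.e. $f(s) \not\sqsubseteq_Z y$, contradicting that $y$ bounds $f(S)$. Hence $f(\sup(S)) \sqsubseteq_Z y$, so $\sup(f(S))$ exists and equals $f(\sup(S))$.

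For the converse $(2)\Rightarrow(1)$, I would assume $f$ is monotone and preserves existing directed suprema, and verify that $f^{-1}(O)$ satisfies the two defining conditions of Lemma~\ref{lemma:scott-topology} for each Scott open $O \subseteq Z$. Upward closure of $f^{-1}(O)$ is immediate from monotonicity of $f$ together with the upward closure of $O$. For the second condition, let $S \subseteq X$ be directed with $\sup(S)$ existing and $\sup(S) \in f^{-1}(O)$, i.e. $f(\sup(S)) \in O$. Monotonicity makes $f(S)$ directed, and the supremum-preservation hypothesis gives $\sup(f(S)) = f(\sup(S)) \in O$; applying condition (2) of Lemma~\ref{lemma:scott-topology} to the directed set $f(S)$ in $Z$ produces some $s \in S$ with $f(s) \in O$, that is $s \in S \cap f^{-1}(O)$. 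Thus $f^{-1}(O)$ is Scott open and $f$ is continuous.

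Both implications are, in the end, direct verifications, and the principal down-set trick is what keeps them routine. The only genuinely delicate point is the \emph{existence} (not merely the value) of $\sup(f(S))$ in the forward direction: monotonicity hands us the candidate upper bound $f(\sup(S))$ for free, but to conclude that it is the least upper bound — and hence that the supremum exists at all — one must deploy the full strength of condition (2) of the Scott topology against the complement of a principal down-set. I would take care to treat this minimality step explicitly rather than absorbing it into a vague appeal to continuity, since it is precisely the place where the topological hypothesis is converted into order-theoretic information.
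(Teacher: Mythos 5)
Your proof is correct and complete: the paper itself gives no proof of this lemma, citing Goubault-Larrecq instead, and your argument via the Scott-closedness of principal down-sets $\downarrow a$ is exactly the standard one found in that reference. You also rightly isolate the one subtle point --- establishing the \emph{existence} of $\sup(f(S))$ by showing $f(\sup(S))$ is least among upper bounds, using condition (2) of Lemma~\ref{lemma:scott-topology} against $Z \setminus \downarrow y$ --- which is where the topological hypothesis does real work.
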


\begin{proof}[Proof of Prop.~\ref{prop:bounded-complete}]
Let $F\In C(X,Z)$ be a $\sqsubseteq_\mathrm{C}$-bounded set.
Then for each $x\in X$ the set
\[ F(x):=  \{f(x) ~:~ f \in F\}
  = \{ z \in Z ~:~ (\exists f \in F) \ z=f(x) \} 
\]
is bounded. Since we assume $(Z,\sqsubseteq_Z)$ to be bounded complete, we can
define a function $g:X\to Z$ by
\[ g(x) := \sup(F(x)) . \]

First, we show that this function $g$ is Scott continuous. We use the characterization
in Lemma~\ref{lemma:scott-continuous}.
First we show that $g$ is monotone.
Let us consider $x,y \in X$ with $x\sqsubseteq_X y$.
As each function $f\in F$ is monotone, we have for each function $f\in F$
\[ f(x) \sqsubseteq_Z f(y) \sqsubseteq_Z g(y), \]
hence, $g(x) \sqsubseteq_Z g(y)$. Thus, $g$ is monotone.
Now, let $S\In X$ be a directed set. Then the set $g(S)$ is directed as well 
because $g$ is monotone.
On the one hand, for every $s\in S$ we have
$g(s) \sqsubseteq_Z g(\sup(S))$ because $g$ is monotone,
hence, 
\[ \sup(g(S)) \sqsubseteq_Z g(\sup(S)) . \]
On the other hand, 
for every $f\in F$ and every $s\in S$,
\[ f(s) \sqsubseteq_Z g(s) \sqsubseteq_Z \sup(g(S)) . \]
Thus, for every $f\in F$,
\[ \sup(f(S)) \sqsubseteq_Z \sup(g(S)) . \]
Finally, every $f\in F$ is Scott continuous.
Therefore, for each $f\in F$ we have
\[ f( \sup(S) ) = \sup(f(S)) . \]
The last two formulae together give
\[ f( \sup(S)) \sqsubseteq_Z \sup(g(S)) \]
for each $f\in F$. We obtain
\[ g( \sup(S) ) = \sup \{ f(\sup(S)) ~:~ f \in F\}
   \sqsubseteq_Z \sup(g(S)) . \]
We have shown $g(\sup(S)) = \sup(g(S))$.
This completes the proof of our claim that
$g$ is Scott continuous.

It is obvious that $g$ is an upper bound of $F$.
If $h \in C(X,Z)$ is any upper bound of $F$ then
for all $x\in X$ and all $f\in F$,
$f(x) \sqsubseteq_Z h(x)$, hence,
$g(x) = \sup \{f(x) ~:~ f \in F\} \sqsubseteq_Z h(x)$, hence,
$g \sqsubseteq_\mathrm{C} h$.
We have shown that $g$ is the least upper bound of $F$.
\end{proof}

One may now speculate whether for an arbitrary topological space $X$
and an arbitrary bounded complete dcpo $Z$ the dcpo
$C(X,Z)$ is bounded complete as well; compare
the discussion in the introduction.
But it is the second main goal of this note to show that in general this is not the case.

\begin{theorem}
\label{theorem:bounded-complete-counterexample}
There exist a topological space $X$ and a complete lattice $Z$ 
with the following properties:
\begin{enumerate}
\item
$C(X,Z)$ is a bounded dcpo with least element, but not bounded complete.
\item
There exist two functions $f,g\in C(X,Z)$ with the following two properties.
\begin{enumerate}
\item
The set $\{f,g\}$ does not have a supremum in $C(X,Z)$.
\item
The {\em pointwise supremum} of $f$ and $g$, that is, the function
$h:X \to Z$ defined by $h(x) := \sup\{f(x),g(x)\}$, is not Scott continuous.
\end{enumerate}
\end{enumerate}
\end{theorem}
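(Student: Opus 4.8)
The plan is to reuse the complete lattice $Z$ produced by Theorem~\ref{theorem:main}, for which the binary supremum $\supt:Z\times Z\to Z$ fails to be continuous with respect to the product topology, and to turn this single failure into a failure of bounded completeness of a function space. Concretely, I would take $X:=Z\times Z$ equipped with the product of the Scott topologies on the two copies of $Z$, and let $f,g:X\to Z$ be the two coordinate projections, $f(z,z'):=z$ and $g(z,z'):=z'$. Projections are continuous for the product topology, so $f,g\in C(X,Z)$, and their pointwise supremum is exactly $h(z,z')=\sup\{z,z'\}=\supt(z,z')$, which by Theorem~\ref{theorem:main} is not Scott continuous. This already yields part~(2)(b). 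For the part of~(1) not involving bounded completeness, note that $Z$, being a complete lattice, is a bounded dcpo with least element, so Corollary~\ref{cor:bounded} gives at once that $C(X,Z)$ is a bounded dcpo with least element.

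The real work is to show that $\{f,g\}$ has no supremum in $C(X,Z)$ (part~(2)(a) and the remaining half of~(1)). The key reduction is the following. Let $U\In C(X,Z)$ be the set of continuous upper bounds of $\{f,g\}$, equivalently the continuous functions $s$ with $h\sqsubseteq_\mathrm{C} s$; it is nonempty (the constant function $\top$ belongs to it), and since $Z$ is a complete lattice the pointwise infimum $t(x):=\inf\{s(x):s\in U\}$ is a well-defined function $X\to Z$ with $t\sqsupseteq h$ pointwise. If a least upper bound $s^{*}$ of $\{f,g\}$ existed, then $s^{*}\sqsubseteq_\mathrm{C} s$ for every $s\in U$ would force $s^{*}(x)\sqsubseteq t(x)$, while $s^{*}\in U$ would force $t(x)\sqsubseteq s^{*}(x)$; hence $s^{*}=t$, and in particular $t$ would be Scott continuous. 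So it suffices to prove that $t$ is \emph{not} Scott continuous.

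To see that $t$ is discontinuous I would show that $t$ agrees with the discontinuous map $h=\supt$ at enough points. Since $\supt$ is not continuous, there are a point $p_{0}\in Z\times Z$ and a Scott open set $E\In Z$ with $h(p_{0})\in E$ such that $h^{-1}(E)$ is not a neighbourhood of $p_{0}$; thus every product neighbourhood of $p_{0}$ contains a point $x_{0}$ with $h(x_{0})\notin E$. For such a point $x_{0}=(a,b)$, with $c:=h(x_{0})=\sup\{a,b\}$, consider the candidate witness
\[ \sigma_{x_{0}}(z,z'):=\sup\{z,z',c\}. \]
It is an upper bound of $\{f,g\}$ because $\sigma_{x_{0}}\sqsupseteq h$ everywhere, and it is tight at $x_{0}$, that is $\sigma_{x_{0}}(x_{0})=\sup\{a,b,c\}=c=h(x_{0})$. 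Provided each such $\sigma_{x_{0}}$ lies in $U$, we obtain $t(x_{0})\sqsubseteq\sigma_{x_{0}}(x_{0})=h(x_{0})$, hence $t(x_{0})=h(x_{0})\notin E$ for these points; since they occur in every product neighbourhood of $p_{0}$, while $t(p_{0})\sqsupseteq h(p_{0})\in E$, the set $t^{-1}(E)$ contains $p_{0}$ but is not a neighbourhood of $p_{0}$. Thus $t$ is not Scott continuous, which by the reduction above completes the argument: $\{f,g\}$ has no supremum in $C(X,Z)$, so $C(X,Z)$ is not bounded complete, and the pointwise supremum $h=\supt$ is the desired discontinuous function.

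The main obstacle is precisely the verification that the witnesses $\sigma_{x_{0}}$ are Scott continuous, i.e.\ continuous from the \emph{product} topology on $Z\times Z$ to the Scott topology on $Z$. Here one cannot appeal to the negative result of Theorem~\ref{theorem:main}: writing $r_{c}(z):=\sup\{z,c\}$, Lemma~\ref{lemma:scott-continuous} shows that each single-variable map $r_{c}:Z\to Z$ is Scott continuous, since it is monotone and $\sup\{r_c(s):s\in S\}=\sup(S\cup\{c\})=\sup\{\sup(S),c\}=r_c(\sup(S))$ for every directed $S$; moreover $\sigma_{x_{0}}=\supt\circ(r_{c}\times r_{c})$ only ever evaluates $\supt$ on the sublattice $\{w\in Z:c\sqsubseteq w\}\times\{w\in Z:c\sqsubseteq w\}$. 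The task is therefore to check, using the explicit description of the Scott open subsets of $Z$ coming from the modified Johnstone construction of Sections~\ref{section:johnstone}--\ref{section:sup} (in particular Lemma~\ref{lemma:mod-johnstone-1}), that for the relevant parameters $c=h(x_{0})$ near $p_{0}$ the preimage $\sigma_{x_{0}}^{-1}(O)$ is product open for every Scott open $O\In Z$. This is a concrete but somewhat technical computation in which the specific geometry of the counterexample is essential; that continuity genuinely can fail is shown by the case $c=\bot$, where $\sigma_{x_{0}}=\supt$, so the argument must exploit that the witnesses actually used have parameter $c\neq\bot$.
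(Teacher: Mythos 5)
Your global setup coincides with the paper's: $Z$ from Theorem~\ref{theorem:main}, $X=Z\times Z$ with the product topology, $f=\pi_1$, $g=\pi_2$, so that the pointwise supremum is $\supt$, which gives (2)(b) and, via Corollary~\ref{cor:bounded}, the positive half of (1). Your reduction of (2)(a) to the discontinuity of a single canonical candidate is also sound in principle (the paper's Lemma~\ref{lemma:KL} does the analogous reduction, showing that a supremum of $\{f,g\}$, if it exists, must equal the pointwise supremum $h$; you instead compare with the pointwise infimum $t$ of all continuous upper bounds, which works equally well once one has, at each relevant point $x_0$, a continuous upper bound of $\{f,g\}$ that is \emph{tight} at $x_0$).

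The genuine gap is exactly where you flag it: the continuity of your tight witnesses $\sigma_{x_0}(z,z')=\sup\{z,z',c\}$ is asserted as a ``task'' and never carried out, and it is not a routine verification --- it amounts to showing that $\supt$ restricted to $\uparrow c\times\uparrow c$ is product-continuous for the particular parameters $c$ arising near the witness point of discontinuity, a claim whose truth depends delicately on the geometry of the modified Johnstone construction and which, as you yourself note, fails outright for $c=\bot$. Without it the argument does not close. The missing idea is that one does not need a lattice-theoretic witness at all: for any topological space $X$ and complete lattice $Z$, the two-valued ``step function''
\[ s_{x_0}(x):=\begin{cases} \sup\{f(x_0),g(x_0)\} & \text{if } x\in\cl(\{x_0\}),\\ \top & \text{otherwise,}\end{cases} \]
is automatically continuous, since the preimage of any Scott open set is $\emptyset$, $X\setminus\cl(\{x_0\})$ or $X$; and it is an upper bound of $f$ and $g$ because a continuous map into a poset with the Scott topology is monotone for the specialization preorder, so $f(x)\sqsubseteq f(x_0)$ for all $x\in\cl(\{x_0\})$ (and likewise for $g$). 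These $s_{x_0}$ are tight at $x_0$ by construction, which forces any supremum of $\{f,g\}$ in $C(X,Z)$ to equal $h$ pointwise (this is the content of Lemma~\ref{lemma:KL}), and then the discontinuity of $h=\supt$ finishes the proof. Replacing your $\sigma_{x_0}$ by these step functions repairs the argument and, incidentally, makes it independent of the specific counterexample $Z$.
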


It is clear that Property (a) in Theorem~\ref{theorem:bounded-complete-counterexample}
implies Property (b). 
The converse in not so obvious, but true as well, as the following lemma shows.
It will be used in the proof of Theorem~\ref{theorem:bounded-complete-counterexample}.

\begin{lemma}
\label{lemma:KL}
For any topological space $X$, any complete lattice $Z$ and any two
functions $f,g \in C(X,Z)$ the following two conditions are equivalent.
\begin{enumerate}
\item
The set $\{f,g\}$ has a supremum in $C(X,Z)$.
\item
The pointwise supremum $h:X\to Z$ of $f$ and $g$ defined by
\[ h(x) := \sup\{f(x),g(x)\} \]
for all $x\in X$ is Scott continuous.
\end{enumerate}
Furthermore, if one (and then both) of these two conditions is satisfied
then the pointwise supremum of $f$ and $g$ is the supremum of $\{f,g\}$ in $C(X,Z)$.
\end{lemma}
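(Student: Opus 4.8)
The plan is to prove both implications together with the ``furthermore'' clause by establishing a single statement: \emph{whenever either condition holds, the pointwise supremum $h$ lies in $C(X,Z)$ and is the supremum of $\{f,g\}$ there.} Throughout I use that $\sqsubseteq_\mathrm{C}$ is the pointwise order and that, $Z$ being a complete lattice, $h(x)=\sup\{f(x),g(x)\}$ is defined for every $x$ and satisfies $f(x)\sqsubseteq h(x)$ and $g(x)\sqsubseteq h(x)$. The direction $(2)\Rightarrow(1)$ is then immediate: if $h$ is Scott continuous it lies in $C(X,Z)$, so those inequalities read $f\sqsubseteq_\mathrm{C} h$ and $g\sqsubseteq_\mathrm{C} h$; and for any upper bound $k\in C(X,Z)$ of $\{f,g\}$ we have $f(x),g(x)\sqsubseteq k(x)$, whence $h(x)=\sup\{f(x),g(x)\}\sqsubseteq k(x)$ for all $x$, i.e.\ $h\sqsubseteq_\mathrm{C} k$. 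Thus $h=\sup\{f,g\}$, which also yields the final clause.

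For the harder direction $(1)\Rightarrow(2)$, assume $s:=\sup\{f,g\}$ exists in $C(X,Z)$. Since $s$ is an upper bound, $f(x),g(x)\sqsubseteq s(x)$ and hence $h(x)\sqsubseteq s(x)$ for every $x$; so it suffices to prove the reverse pointwise inequality $s(x_0)\sqsubseteq h(x_0)$ for an arbitrary fixed $x_0$, for then $s=h$ pointwise and $h=s$ is Scott continuous. Because $s$ is the \emph{least} upper bound, it is enough to exhibit one continuous upper bound $k$ of $\{f,g\}$ with $k(x_0)=h(x_0)$: then $s\sqsubseteq_\mathrm{C} k$ forces $s(x_0)\sqsubseteq k(x_0)=h(x_0)$.

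The heart of the matter, and the main obstacle, is constructing such a $k$: the function $h$ itself need not be continuous, and naive ``capping'' of $s$ at the value $a:=h(x_0)$ fails, since binary meets are not Scott continuous in an arbitrary complete lattice. My plan is instead to glue a constant to the top element. Writing $\top:=\sup Z$ and $U:=\{z\in Z: z\not\sqsubseteq a\}$ (which is Scott open, since its complement $\{z\in Z:z\sqsubseteq a\}$ is downward closed and closed under suprema of directed sets, hence Scott closed), define $k:X\to Z$ by
\[ k(x):=\begin{cases} a & \text{if } f(x)\sqsubseteq a \text{ and } g(x)\sqsubseteq a,\\ \top & \text{otherwise.}\end{cases} \]
Then $k$ is an upper bound of $\{f,g\}$ (on the first region $a$ dominates both $f(x)$ and $g(x)$; on the second $\top$ dominates everything), and $k(x_0)=a$ because $f(x_0),g(x_0)\sqsubseteq a$ by definition of $a$.

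The one nontrivial point left is the continuity of $k$, which I would verify by inspecting preimages of Scott open sets $W$ directly, since $X$ is merely a topological space and Lemma~\ref{lemma:scott-continuous} does not apply. The key observation is that $\top$ lies in \emph{every} nonempty Scott open set, such a set being nonempty and upward closed. Hence, writing $R:=f^{-1}(U)\cup g^{-1}(U)$ (an open subset of $X$, exactly the locus where $k=\top$, its complement being where $k=a$), one finds: if $W=\emptyset$ then $k^{-1}(W)=\emptyset$; if $W\neq\emptyset$ and $a\in W$ then $k^{-1}(W)=X$; and if $W\neq\emptyset$ and $a\notin W$ then $k^{-1}(W)=R$. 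In every case $k^{-1}(W)$ is open, so $k\in C(X,Z)$. Feeding this $k$ into the minimality of $s$ gives $s(x_0)\sqsubseteq h(x_0)$, and since $x_0$ was arbitrary we conclude $s=h$, so $h=s$ is Scott continuous, establishing $(2)$ and simultaneously the ``furthermore'' assertion.
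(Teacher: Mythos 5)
Your proof is correct, and its skeleton coincides with the paper's: the direction $(2)\Rightarrow(1)$ is the same easy pointwise argument, and for $(1)\Rightarrow(2)$ both proofs construct, for each $x_0$, a continuous upper bound of $\{f,g\}$ whose value at $x_0$ is exactly $h(x_0)$, and then invoke minimality of the supremum. The only genuine difference is the witness. The paper's step function takes the value $a=\sup\{f(x_0),g(x_0)\}$ on $\cl(\{x_0\})$ and $\top$ elsewhere; its continuity is immediate, but showing it dominates $f$ and $g$ requires the observation that $f^{-1}(Z\setminus\mathord{\downarrow}f(x_0))$ is open and cannot contain a point of $\cl(\{x_0\})$ without containing $x_0$. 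Your function takes the value $a$ on $f^{-1}(\mathord{\downarrow}a)\cap g^{-1}(\mathord{\downarrow}a)$ and $\top$ elsewhere; there the upper-bound property is immediate and the work shifts to continuity, which you settle by noting that $\{z : z\not\sqsubseteq a\}$ is Scott open and that $\top$ lies in every nonempty Scott open set. The two constructions lean on the same underlying fact (principal down-sets are Scott closed, so their preimages under $f$ and $g$ are closed), and your region of value $a$ contains the paper's, so your witness is pointwise below the paper's; either works, and yours has the mild advantage of never mentioning the closure of a point or the specialization preorder of $X$.
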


\begin{proof}
Fix some functions $f,g \in C(X,Z)$, and let $h:X\to Z$ be the pointwise
supremum of $f$ and $g$.

``$2\Rightarrow 1$'':
If $h$ is continuous and $r \in C(X,Z)$ any function with $f \sqsubseteq_\mathrm{C} r$
and $g \sqsubseteq_\mathrm{C} r$ then $h \sqsubseteq_\mathrm{C} r$.
Thus, in this case $h$ is a supremum of $\{f,g\}$ in $C(X,Z)$.

For the converse direction, let us first show the following claim.

{\bf Claim.}
For any $x_0 \in X$ there exists a function $s_{x_0} \in C(X,Z)$ with
$f \sqsubseteq_\mathrm{C} s_{x_0}$, with
$g \sqsubseteq_\mathrm{C} s_{x_0}$ and with
$s_{x_0}(x_0) = \sup\{f(x_0),g(x_0)\}$.

{\em Proof of this claim.}
Let $\top:=\sup(Z)$. 
Let us fix an arbitrary point $x_0\in X$, and let $\cl(\{x_0\})$ be the closure
of the set $\{x_0\}$, that is,
\[ \cl(\{x_0\}) = \{x \in X ~:~ \text{ for every open } U \subseteq X,
   \text{ if } x \in U \text{ then } x_0 \in U\} .\]
Let us define a ``step function'' $s_{x_0}:X\to Z$ by
\[ s_{x_0}(x) :=    \begin{cases} 
       \sup\{f(x_0),g(x_0)\} & \text{ if } x \in \cl(\{x_0\}), \\
       \top & \text{ otherwise.}
       \end{cases} \]
We claim that this function has the desired properties.
First, we claim that $s_{x_0}$ is Scott continuous.
This is clear if $\sup\{f(x_0),g(x_0)\} = \top$. Also in the case
$\sup\{f(x_0),g(x_0)\} \neq \top$ for any Scott open subset $V\subseteq Z$ the set
\[ s_{z_0}^{-1}(V) =        
          \begin{cases} 
             \emptyset & \text{ if } V = \emptyset, \\
             X \setminus \cl(\{x_0\})
              & \text{ if } V\neq \emptyset \text{ and } \sup\{f(x_0),g(x_0)\} \not\in V, \\
              X & \text{ if } \sup\{f(x_0),g(x_0)\} \in V
\end{cases} \]
is an open subset of $X$. Hence, the function $s_{z_0}$ is continuous,
thus, an element of $C(X,Z)$.
We claim that furthermore $f \sqsubseteq_\mathrm{C} s_{x_0}$.
Indeed, if $x\not\in \cl(\{x_0\})$ then $f(x) \sqsubseteq_Z \top = s_{x_0}(x)$.
Let us now consider the case $x\in \cl(\{x_0\})$.
The set
\[ \downarrow f(x_0) := \{ z\in Z ~:~ z \sqsubseteq f(x_0) \}\]
is a Scott closed subset of $Z$, that is, its complement
$Z \setminus \downarrow f(x_0)$ is a Scott open subset of $Z$.
As $f:X\to Z$ is Scott continuous, the set
$U:=f^{-1}(Z \setminus \downarrow f(x_0))$
is an open subset of $X$.
If $x$ were an element of $U$ then $x_0$ would have to be an element of $U$ as well,
which is false.
Thus, $x\not\in U$. This means $f(x) \sqsubseteq_Z f(x_0)$. We obtain
\[ f(x) \sqsubseteq_Z f(x_0) \sqsubseteq_Z \sup\{f(x_0,g(x_0)\} = s_{x_0}(x) . \]
We have shown $f \sqsubseteq_\mathrm{C} s_{x_0}$. By the same argument 
$g \sqsubseteq_\mathrm{C} s_{x_0}$ follows. 
Finally, it is clear that $s_{x_0}(x_0) = \sup\{f(x_0),f(x_0)\}$.
We have shown the claim.

``$1\Rightarrow 2$'':
Let us assume that the set $\{f,g\}$ has a supremum in $C(X,Z)$.
Let us call this supremum $t$.
Let us consider some $x_0 \in X$.
Clearly $h(x_0) = \sup\{f(x_0),g(x_0)\} \sqsubseteq_Z t(x_0)$.
According to the claim that we have just proved, for any $x_0$ there exists a function
$s_{x_0} \in C(X,Z)$ with
$f \sqsubseteq_\mathrm{C} s_{x_0}$, with
$g \sqsubseteq_\mathrm{C} s_{x_0}$ and with
$s_{x_0}(x_0) = \sup\{f(x_0),g(x_0)\}$.
This implies $t = \sup\{f,g\} \sqsubseteq_\mathrm{C} s_{x_0}$ and in particular
$t(x_0) \sqsubseteq_Z s_{x_0}(x_0) = \sup\{f(x_0),g(x_0)\} = h(x_0)$.
Thus, we do not only have $h(x_0) \sqsubseteq_Z t(x_0)$ but also
$t(x_0) \sqsubseteq_Z h(x_0)$, thus, $t(x_0)=h(x_0)$.
As this is true for all $x_0\in X$, we have $t=h$,
that is, $t$ must be the pointwise supremum of $f$ and $g$.
\end{proof}

\begin{proof}[Proof of Theorem~\ref{theorem:bounded-complete-counterexample}]
Let $(Z,\sqsubseteq)$ be a complete lattice as in Theorem~\ref{theorem:main}. Then
$Z$ is bounded ($\sup(Z)$ is an upper bound) as well as
bounded complete and directed complete (every subset has a supremum),
and it has a least element.
For $X$ we choose $Z\times Z$ with the product topology
of the Scott topologies on each copy of $Z$.
By Cor.~\ref{cor:bounded} $C(X,Z)$ is a bounded dcpo with least element.
Once we have shown that there exist two functions $f,g\in C(X,Z)$
that do not have a supremum in $C(X,Z)$
we have shown that $C(X,Z)$ is not bounded complete.

In fact, for $f$ and $g$ we can take the projection functions. For $i=1,2$
we define $\pi_i:X=Z\times Z \to Z$ by
$\pi_i(z_1,z_2):=z_i$. Then $\pi_i$ is continuous with respect to the product
topology on $X=Z\times Z$, thus, $\pi_i \in C(X,Z)$.
Note that the function $\supt$ is the pointwise supremum of $\pi_1$ and $\pi_2$.
As we have chosen $Z$ to be a complete lattice
as in Theorem~\ref{theorem:main}
and $X$ to be $Z\times Z$ with the product topology,
the function $\supt$ is not an element of $C(X,Z)$, that is,
$\supt$ is not Scott continuous. This shows the final claim of the theorem.
And according to Lemma~\ref{lemma:KL}, the set $\{\pi_1,\pi_2\}$
does not have a suprumum in $C(X,Z)$.
This shows the other claims of the theorem.
This ends the proof of Theorem~\ref{theorem:bounded-complete-counterexample}.
\end{proof}

\end{document}